\newtheorem{theorem}{Theorem}
\newtheorem{lemma}[theorem]{Lemma}
\newtheorem{claim}[theorem]{Claim}
\newtheorem{definition}[theorem]{Definition}
\mathchardef\mhyphen="2D
\newcommand{\ket}[1]{|#1\rangle}
\newcommand{\bra}[1]{\langle#1|}
\newcommand{\ketbra}[2]{|#1\rangle\! \langle #2|}
\newcommand{\opt}{\mbox{\rm OPT}}
\newcommand{\polylog}[1]{\mbox{\rm polylog}\left(#1\right)}
\newcommand{\R}{\mathbb{R}}
\newcommand{\eps}{\varepsilon}
\newcommand{\nrm}[1]{\left\lVert#1\right\rVert}
\def\01{\{0,1\}}
\newcommand{\diag}{\mathrm{diag}}
\newcommand{\bigO}[1]{\mathcal{O}\left( #1 \right)}
\newcommand{\bOt}[1]{\widetilde{\mathcal O}\left(#1\right)}
\newcommand{\term}[1]{\left(#1\right)}
\begin{document}
\title{Quantum algorithms for zero-sum games}
\author{Joran van Apeldoorn\thanks{Supported by the Netherlands Organization for Scientific Research, grant number 617.001.351 and partially supported by QuantERA project QuantAlgo 680-91-034.}
  \and
  Andr\'as Gily\'en\thanks{Supported by ERC Consolidator Grant 615307-QPROGRESS and partially supported by QuantERA project QuantAlgo 680-91-034.}
}
\date{\small QuSoft, CWI and University of Amsterdam, the Netherlands. \\ e-mail: {\tt \{apeldoor,gilyen\}@cwi.nl}\\\vskip3mm\today}
\maketitle

\begin{abstract}
	We derive sublinear-time quantum algorithms for computing the Nash equilibrium of two-player zero-sum games, based on efficient Gibbs sampling methods.
We are able to achieve speed-ups for both dense and sparse payoff matrices at the cost of a mildly increased dependence on the additive error compared to classical algorithms.
In particular we can find $\eps$-approximate Nash equilibrium strategies in complexity $\bOt{\sqrt{n+m}/\eps^3}$ and $\bOt{\sqrt{s}/\eps^{3.5}}$ respectively, where $n\times m$ is the size of the matrix describing the game and $s$ is its sparsity.
Our algorithms use the LP formulation of the problem and apply techniques developed in recent works on quantum SDP-solvers.
We also show how to reduce general LP-solving to zero-sum games, resulting in quantum LP-solvers that have complexities $\bOt{\sqrt{n+m}\gamma^3}$ and $\bOt{\sqrt{s}\gamma^{3.5}}$ for the dense and sparse access models respectively, where $\gamma$ is the relevant ``scale-invariant'' precision parameter.
%However, in contrast to state of the art quantum SDP-solvers our algorithms do not depend on extra parameters which tend to be large for many problems.
\end{abstract}
\section{Introduction}
A \emph{matrix game} is a two-player zero-sum game in which both players have only a finite number of pure strategies.
We label the moves for the first player (called Alice) by $[n]$ and the moves for the second player (called  Bob) by $[m]$.
If Alice plays $i\in[n]$ and Bob plays $j\in[m]$, then Alice gets a \emph{payoff} $A_{ij} \in [-1,1]$ and Bob gets payoff $-A_{ij}$.
Their individual goal is to get the highest payoff possible.
The payoff can be written in matrix form $A\in [-1,1]^{n\times m}$, hence the name \emph{matrix game}.
A game is called \emph{symmetric} if $m=n$ and $A = -A^T$, in other words, the payoff matrix is \emph{skew symmetric}.

If one of the players would always play the same move, then (for most games) it would be easy for the other player to win.
Hence a strategy will be randomized in general.
Let $\Delta^n$ be the set of all non-negative vectors in $\R^n$ that sum to $1$, i.e., the set of all probability distributions over $n$ elements.
Given a randomized strategy $x\in\Delta^n$ for Alice and a randomized strategy $y\in\Delta^m$ for Bob, the expected payoff for Alice is $x^TAy$.

Naturally Bob's goal is to minimize Alice's expected payoff; the best he can do is to assume that Alice plays the best strategy $x$ on her side and optimize his $y$ for that:
\[
   \min_{y \in \Delta^m} \max_{x \in \Delta^n} x^TAy
\]

We can write this as a linear program (LP) by noting that a linear function over the simplex (in this case $x\mapsto x^TAy$) is maximized on a vertex of the simplex:
\[
   \min_{y \in \Delta^m} \max_{x \in \Delta^n} x^TAy = \min_{y \in \Delta^m} \max_{i\in[n]} e_i^TAy
\]
which can be written as an LP:
\vspace{-2mm}
\begin{align}\label{eq:primal}
  \min \ \ \ &\lambda\\
  s.t.
\ \ \ & Ay \leq  \lambda e\nonumber\\
             & y\in\Delta^m\nonumber\\
             & \lambda\in [-1,1]\nonumber
\end{align}
where $e$ is the all-one vector.
Notice that since Alice's strategy $x$ is in $\Delta^n$ she can indeed not get a better expected value than $\lambda$.
In fact, the \emph{dual} of this LP is the LP for Alice, and due to strong duality this gives the same value! Hence we will call the optimal $\lambda^*$ the \emph{value} of the game. The corresponding strategies are called a \emph{Nash equilibrium}.
Notice that for a symmetric game the value is always $0$.
For completeness, let us also state the dual problem:
\begin{align}\label{eq:dual}
  \max \ \ \ &\lambda'\\
  s.t.
\ \ \ & A^T x \geq  \lambda' e\nonumber\\
             & x\in\Delta^n\nonumber\\
             & \lambda'\in [-1,1]\nonumber
\end{align}

We will call a strategy $y$ for Bob \emph{$\eps$-optimal} when $Ay\leq (v^*+\eps)e$, and similar for Alice.
Grigoriadis and Khachiyan~\cite{grigoriadis1995SubLinRndApxAlgMatrixGames} showed that a pair of $\eps$-optimal strategies can be found using a classical computation consisting of $\bigO{\log(n+m)/\eps^2}$ iterations, and $\bigO{n+m}/\log(n+m)$ steps per iteration\footnote{The authors show that the steps in each iteration can be highly parallelized.}.
In fact, a more careful analysis shows that an iteration can be implemented in $s$ operations, where $s$ is the maximal row and column sparsity of $A$.
Notice that this leads to a \emph{sub-linear} amount of work!
They show this by first converting any game to a symmetric game and then showing that symmetric games can be solved by a randomized algorithm.
In Section~\ref{sec:classical}, we give a proof of their results that directly applies to non-symmetric games.
In the same paper the authors also prove that any \emph{deterministic} algorithm would require at least $mn/2$ queries to the input.

As in recent work on convex optimization using \emph{quantum computers} \cite{brandao2016QSDPSpeedup,apeldoorn2017QSDPSolvers,brandao2017QSDPSpeedupsLearning,apeldoorn2018ImprovedQSDPSolving} we show that using a quantum computer a quadratic improvement in terms of the dimensions $n$ and $m$ can be achieved, at the expense of a slightly heavier dependence on the precision. In particular, in Section~\ref{subsec:qdense} we show that on a quantum computer $\bOt{\sqrt{n+m}/\eps^3}$ queries to the entries of $A$ and the same number of other gates suffice to implement the algorithm by Grigoriadis and Khachiyan~\cite{grigoriadis1995SubLinRndApxAlgMatrixGames}.
In Section~\ref{subsec:qsparse} we show that this can be improved to $\bOt{\sqrt{s}/\eps^{3.5}}$ for sparse matrices. Note that unlike in the aforementioned works, the classical algorithm that we speed up (and hence our quantum algorithm) does not depend on additional scale parameters, thus the achieved speed-ups seem more applicable in practice.

In Section~\ref{sec:reduction} we also show how to reduce general LP-solving to zero-sum games, resulting in our new general purpose quantum LP-solvers. This reduction introduces an extra dependence on the size of the primal and dual solutions in the complexity. However, the dependence on these parameters and on the approximation error is only cubic, where as an LP-solver obtained from the recent results on SDP-solving~\cite{apeldoorn2018ImprovedQSDPSolving} would have a fifth power dependence. Furthermore, we give the first quantum LP-solver which depends on the sparsity of the LP instead of on $n$ and $m$\footnote{The sparsity of an LP should not be confused with the sparsity parameter relevant in SDP solving. An LP that is written as an SDP will have SDP sparsity $1$ since all the matrices involved are diagonal. Our goal here is to get a dependence on the LP sparsity instead of a dependence on $n$ and $m$.}.

\paragraph{Notation.}
We write $\bm{e}$ for Euler's constant. For a vector $x\in\R^n$ we write $\bm{e}^x$ for the vector with entries $\bm{e}^{x_i}$.
Throughout the paper $i$ is an index in $[n]$ and  $j$ is an index in $[m]$.
We write $e_j$ for the $j$-th standard basis vector and $e$ for the all-one vector when the dimension is clear from context. We use notation $\bOt{T}$ as a shorthand for $\bigO{T\cdot\polylog{T\frac{nm}{\eps\delta}}}$.

\paragraph{Computational model.} When talking about the time complexity or gate complexity of a quantum algorithm then we assume that all two-qubit gates have unit cost. Furthermore we assume access to a classical-write / quantum-read random access memory at unit cost.
We only require such a memory consisting of $\bOt{1/\eps^2}$ bits, so not allowing such a gate will worse the gate complexity by at most a factor of $\bOt{1/\eps^2}$.

\section{Classical algorithm} \label{sec:classical}
In this section we will present the classical zero-sum game algorithm developed by Grigoriadis and Khachiyan~\cite{grigoriadis1995SubLinRndApxAlgMatrixGames}, with two alterations:
\begin{enumerate}
\item We give the algorithm and its proof without first reducing the problem to symmetric games.
In this way we lay more emphasis on the fact that this is a \emph{primal-dual} approach and on the connection to \emph{fictitious play}.
Furthermore, we hope that this view on the algorithm will allow for an easier application to other problems.
\item The algorithm by Grigoriadis and Khachiyan assumes the desired additive error $\eps$ is known from the start of the algorithm.
We present a version of the algorithm for which the additive error in the intermediate solutions uniformly decreases during the run of the algorithm, but we also consider a version corresponding to a fixed accuracy goal more similar to their results.
\end{enumerate}

\begin{algorithm}[H]
\caption{Main algorithm}\label{alg:main}
\begin{algorithmic}
  \State $x^{(0)} \gets 0 \in \R^n$ and $y^{(0)} \gets 0 \in \R^m$
  \For { $t = 1,2,\dots$}
     \State $\eta^{(t)} = \frac{1}{2\sqrt{t}}$ (alternatively in the fixed accuracy-goal version choose $\eta^{(t)} = \eps/4$)
     \State $u^{(t)} \gets -A^Tx^{(t)}$ and $v^{(t)} \gets Ay^{(t)}$
     \State $P^{(t)} \gets \bm{e}^{ u^{(t)}} = \bm{e}^{ -A^T x^{(t)}}$ and $Q^{(t)} \gets \bm{e}^{ v^{(t)}} = \bm{e}^{Ay^{(t)}}$
     \State $p^{(t)} \gets P^{(t)} / \nrm{P^{(t)}}_1$ and $q^{(t)} \gets Q^{(t)} / \nrm{Q^{(t)}}_1$
     \State Sample $a\sim p^{(t)}$ and $b\sim q^{(t)}$
     \State $y^{(t+1)} = y^{(t)}+\eta^{(t)} e_a$  and $x^{(t+1)} = x^{(t)}+ \eta^{(t)} e_b$
  \EndFor
\end{algorithmic}
\end{algorithm}
We start by proving that this algorithm is correct, before giving a bound on the complexity.
\begin{lemma}\label{lem:correctness}
  Let $\delta\in (0,1/3)$.
With probability at least $\delta$ Algorithm~\ref{alg:main} produces a sequence of $x^{(t)}$ and $y^{(t)}$ such that for all $t$ the intermediate solutions $x^{(t)}/\nrm{x^{(t)}}_1$ and $y^{(t)}/\nrm{y^{(t)}}_1$ are $\eps'$-optimal solutions with
  \[
    \eps' = \frac{2}{\sqrt{t}} \cdot\left(3\ln(t) + \ln(nm) + \ln(1/\delta) + 2\right).
  \]
  Let $\eps\in (0,1)$. If we run the algorithm with $\eta^{(t)}:=\eps/4$ instead, then, with probability at least $1-\delta$, the solutions are $\eps$-optimal after $T=\Big\lceil\frac{16\ln\left(\frac{nm}{\delta}\right)}{\eps^2}\Big\rceil$ iterations.
\end{lemma}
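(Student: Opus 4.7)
The algorithm is a standard primal-dual multiplicative-weights (Hedge) iteration: $p^{(t)}$ and $q^{(t)}$ are exponential-weights distributions over Bob's and Alice's pure strategies, and the cumulative iterates $x^{(t)},y^{(t)}$ grow by a single randomly sampled pure strategy from those distributions. The plan is to analyze the log-sum-exp potential $\Phi^{(t)}:=\ln\nrm{P^{(t)}}_1+\ln\nrm{Q^{(t)}}_1$ and show that (i) $\Phi^{(t)}$ grows slowly (quadratically in $\eta^{(t)}$ per step, plus a martingale fluctuation), and (ii) $\Phi^{(t)}$ upper-bounds the LP duality gap of the normalized iterates $\bar x^{(t)}:=x^{(t)}/\nrm{x^{(t)}}_1$, $\bar y^{(t)}:=y^{(t)}/\nrm{y^{(t)}}_1$. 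Together with weak LP duality this yields the simultaneous suboptimality bound for both strategies.

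\textbf{Concrete steps.} Since $u^{(t+1)}=u^{(t)}-\eta^{(t)}A^Te_{b_t}$ with $|\eta^{(t)}A_{bj}|\leq 1$, applying $\bm{e}^z\leq 1+z+z^2$ (valid for $|z|\leq 1$) coordinatewise and then $\ln(1+w)\leq w$ gives
\[
\ln\nrm{P^{(t+1)}}_1-\ln\nrm{P^{(t)}}_1\ \leq\ -\eta^{(t)}(Ap^{(t)})_{b_t}+(\eta^{(t)})^2,
\]
with the symmetric bound for $Q$. Telescoping from $t=0$ (where $\Phi^{(0)}=\ln(nm)$) gives
\[
\Phi^{(T)}-\ln(nm)\ \leq\ 2\!\sum_{t<T}(\eta^{(t)})^2+M_T,\qquad M_T:=\sum_{t<T}\eta^{(t)}\!\left[(A^Tq^{(t)})_{a_t}-(Ap^{(t)})_{b_t}\right].
\]
The key observation is that $(M_T)$ is a martingale with increments bounded by $2\eta^{(t)}$: both $\mathbb{E}[(A^Tq^{(t)})_{a_t}\mid\text{history}]$ and $\mathbb{E}[(Ap^{(t)})_{b_t}\mid\text{history}]$ equal $(p^{(t)})^TA^Tq^{(t)}$, so the $A$-dependent parts cancel in expectation. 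I would then apply a time-uniform Azuma/Hoeffding bound---either via Ville's inequality on the Hoeffding exponential supermartingale, or via a dyadic union bound over $T\in[2^k,2^{k+1}]$ with failure probability $\delta c/k^2$---to conclude that with probability at least $1-\delta$, simultaneously for all $T$,
\[
M_T\ \leq\ C\sqrt{\bigl(\sum\nolimits_{t<T}(\eta^{(t)})^2\bigr)\bigl(\ln T+\ln(1/\delta)\bigr)}.
\]
To convert $\Phi^{(T)}$ into a duality gap, use $\ln\sum_j\bm{e}^{z_j}\geq\max_j z_j$ on each summand, obtaining $\Phi^{(T)}\geq\max_i(Ay^{(T)})_i-\min_j(A^Tx^{(T)})_j$; dividing by $S^{(T)}:=\sum_{t<T}\eta^{(t)}=\nrm{x^{(T)}}_1=\nrm{y^{(T)}}_1$ gives the analogous inequality for the normalized iterates. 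Finally, since $\bar y^{(T)}$ is primal-feasible and $\bar x^{(T)}$ is dual-feasible, weak duality gives $\max_i(A\bar y^{(T)})_i\geq\lambda^*\geq\min_j(A^T\bar x^{(T)})_j$, so the two non-negative suboptimalities $\max_i(A\bar y^{(T)})_i-\lambda^*$ and $\lambda^*-\min_j(A^T\bar x^{(T)})_j$ sum to the LHS and are therefore each bounded by $\eps'_T:=[\ln(nm)+2\sum(\eta^{(t)})^2+M_T]/S^{(T)}$.

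\textbf{Plugging in and main obstacle.} For the decreasing schedule $\eta^{(t)}=1/(2\sqrt{t})$ one has $\sum_{t\leq T}(\eta^{(t)})^2\leq(\ln T+1)/4$ and $S^{(T)}\geq\sqrt{T}$; substituting into $\eps'_T$ and tracking constants yields the stated bound $\eps'=\tfrac{2}{\sqrt{t}}(3\ln t+\ln(nm)+\ln(1/\delta)+2)$, where the coefficient $3$ of $\ln t$ absorbs contributions of the $\sum(\eta^{(t)})^2$ term and the Azuma deviation (including the $\ln T$ cost of the time-uniform union bound). For the constant-step variant $\eta^{(t)}=\eps/4$ only a single Azuma application at $T=\lceil 16\ln(nm/\delta)/\eps^2\rceil$ is needed and routine algebra delivers $\eps'_T\leq\eps$. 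The main technical subtlety is engineering the Azuma concentration to hold \emph{uniformly in $t$} at an overhead of only $O(\ln t)$ (rather than the useless $O(t)$ of a naive per-iterate union bound); once this is in place, the rest is standard multiplicative-weights bookkeeping.
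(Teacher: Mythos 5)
Your proposal reaches the correct conclusion and uses the same potential and the same conversion to a duality gap, but the concentration step is handled by a genuinely different mechanism. The paper works with the \emph{product} potential $\Phi(t)=\nrm{P^{(t)}}_1\cdot\nrm{Q^{(t)}}_1$ and exploits a clean cancellation inside the expectation: when one expands $\mathbb{E}[\Phi(t+1)]$ over the product measure $p^{(t)}_a q^{(t)}_b p^{(t)}_j q^{(t)}_i$, the linear term $\eta(A_{ia}-A_{bj})$ cancels pairwise (swap $(i,a)\leftrightarrow(b,j)$), leaving $\mathbb{E}[\Phi(t+1)]\leq(1+3(\eta^{(t)})^2)\Phi(t)$. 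Markov's inequality on $\Phi(t)$ with per-$t$ failure probability $\delta^{(t)}=\delta/(2t^2)$ then does all the probabilistic work; no martingale concentration is needed. You instead work with the log potential, split its increment into a deterministic $O(\eta^2)$ drift plus a zero-mean martingale $M_T$, and invoke a time-uniform Azuma/Hoeffding bound. Both routes are valid multiplicative-weights analyses, but the paper's is lighter: the cancellation does in one line what you achieve with the martingale decomposition, and Markov is a weaker tool than Azuma, which sidesteps the time-uniformity subtlety you flag as the main obstacle. Your approach does make the martingale structure explicit, which is arguably more transportable to other settings, and gives deviation bounds of the same order.

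Two small cautions. First, you assert $S^{(T)}=\sum_{t\leq T}\eta^{(t)}\geq\sqrt{T}$ for $\eta^{(t)}=1/(2\sqrt{t})$; the correct bound (and the one the paper uses) is $S^{(T)}\geq\sqrt{T}/2$, a harmless constant slip but one you would need to track if you want to land exactly on the stated constants. Second, you do not actually carry out the Azuma arithmetic to verify that the $3\ln t$ coefficient and the additive $+2$ come out as claimed; since your drift constant ($e^z\leq1+z+z^2$, giving coefficient $2$) and the Azuma deviation both differ from the paper's $3(\eta^{(t)})^2$ multiplicative step and Markov slack $\ln(1/\delta^{(t)})$, matching the exact constants would require a specific choice of the dyadic weights and is not automatic. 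As stated, your proof establishes the result up to universal constants in the leading terms but does not quite pin down the precise expression in the lemma.
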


\begin{proof}
 Note that $x^{(t)}/\nrm{x^{(t)}}_1$ and $y^{(t)}/\nrm{y^{(t)}}_1$ are indeed probability distributions.
Hence $(x^{(t)}/\nrm{x^{(t)}}_1,\min_j (A^Tx^{(t)}/\nrm{x^{(t)}}_1)_j)$ and  $(y^{(t)}/\nrm{y^{(t)}}_1,\max_i (Ay^{(t)}/\nrm{y^{(t)}}_1)_i)$ are feasible points for the primal~\eqref{eq:primal} and dual~\eqref{eq:dual}.
Due to strong duality, to show that these solutions are $\eps$-optimal, it suffices to show that the difference between their values is at most $\eps$, that is
  \[
    \forall i\in[n] , j\in[m]: \term{Ay^{(t)}/\nrm{y^{(t)}}_1}_i - \term{A^Tx^{(t)}/\nrm{x^{(t)}}_1}_j \leq \eps
  \]
  To do so we consider the potential function
  \[
    \Phi(t) := \term{\sum_{j\in[m]} P_j^{(t)}}\term{\sum_{i\in[n]} Q_i^{(t)}} = \term{\sum_{j\in[m],i\in[n]} P_j^{(t)}Q_i^{(t)}}
  \]
  and show that this is bounded from above.
  In the beginning $\Phi(0) = nm$, moreover
	\vskip-3mm
  \begin{align*}
    \Phi(t+1) &= \term{\sum_{j\in[m]} P_j^{(t+1)}}\term{\sum_{i\in[n]} Q_i^{(t+1)}}\\
              &= \term{ \sum_{j\in[m]} \bm{e}^{ (-A^Tx^{(t)})_j - \eta^{(t)}A_{bj} }} \term{ \sum_{i\in[n]} \bm{e}^{ (Ay^{(t)})_i + \eta^{(t)} A_{ia} }}\\
              &= \term{ \sum_{j\in[m]} P^{(t)}_j \bm{e}^{- \eta^{(t)}A_{bj} }} \term{ \sum_{i\in[n]} Q_j^{(t)} \bm{e}^{  \eta^{(t)} A_{ia} }}\\
              &= \term{ \sum_{j\in[m]} p^{(t)}_j \nrm{P^{(t)}}_1 \bm{e}^{ -\eta^{(t)}A_{bj} }} \term{ \sum_{i\in[n]} q^{(t)}_i \nrm{Q^{(t)}}_1 \bm{e}^{  \eta^{(t)} A_{ia} }}\\
              &= \Phi(t) \term{ \sum_{j\in[m]} p^{(t)}_j  \bm{e}^{ -\eta^{(t)}A_{bj} }} \term{ \sum_{i\in[n]} q^{(t)}_i \bm{e}^{ \eta^{(t)} A_{ia} }}.
  \end{align*}
Taking the expectation over the sampling of $a$ and $b$ and working out the sums we get
  \begin{align*}
    \mathbb{E} [ \Phi(t+1)] &= \Phi(t)  \sum_{a\in[m]} \sum_{b\in[n]}  \sum_{j\in[m]} \sum_{i\in[n]}   p^{(t)}_a q^{(t)}_b p^{(t)}_j q^{(t)}_i \bm{e}^{ \eta^{(t)} (A_{ia} - A_{bj}) }\\
  &\leq \Phi(t)  \sum_{a\in[m]} \sum_{b\in[n]}  \sum_{j\in[m]} \sum_{i\in[n]}   p^{(t)}_a q^{(t)}_b p^{(t)}_j q^{(t)}_i \term{1+\eta^{(t)} \term{A_{ia}-A_{bj}} +3\term{\eta^{(t)}}^2},
  \end{align*}
  where we used the fact that for all $x\in[-1,1]$ it holds that $\bm{e}^x\leq 1+x+3x^2/4$
  % Checked by Mathemtica: Plot[{Exp[x] - 1 - x, 2 x^2 /E}, {x, -1, 1}]
  , which implies $\bm{e}^{\eta^{(t)} \term{A_{ia}-A_{bj}}} - 1 - \eta^{(t)} \term{A_{ia}-A_{bj}}\leq\frac{3}{4}\left(\eta^{(t)} \term{A_{ia}-A_{bj}}\right)^2\leq 3\term{\eta^{(t)}}^2$.
  Now also observe that
  \[
    \sum_{a\in[m]} \sum_{b\in[n]}  \sum_{j\in[m]} \sum_{i\in[n]}   p^{(t)}_a q^{(t)}_b p^{(t)}_j q^{(t)}_i = 1
  \]
  and that all the $A_{ia}-A_{bj}$ terms cancel against another $A_{bj}-A_{ia}$ term with the same $p^{(t)}_a q^{(t)}_b p^{(t)}_j q^{(t)}_i$ coefficient.
  Hence
  \[
    \mathbb{E} [ \Phi(t+1)] \leq \Phi(t) \term{1+3\term{\eta^{(t)}}^2 }
  \]
  and by taking the expectation on both sides and expanding the recursion we get
  \[
    \mathbb{E} [ \Phi(t)]  \leq \Phi(0) \prod_{\tau=0}^{t-1} \term{1+3\term{\eta^{(t)}}^2} \leq nm \bm{e}^{ 3\sum_{\tau=0}^{t-1} \term{\eta^{(\tau)}}^2}.
  \]
  By Markov's inequality, with probability at least $1-\delta^{(t)}$ we have that
  \[
    \Phi(t)  \leq \frac{nm}{\delta^{(t)}} \bm{e}^{ 3\sum_{\tau=0}^{t-1} \term{\eta^{(\tau)}}^2}.
  \]
  Since $\Phi(t)$ is the sum of positive terms, each term itself is smaller than the sum.
  It follows that for all $i\in [n],j\in [m]$
  \[
    P_j^{(t)}Q_i^{(t)} = \bm{e}^{ \term{Ay^{(t)}}_i - \term{A^Tx^{(t)}}_j } \leq \frac{nm}{\delta^{(t)}} \bm{e}^{ 3\sum_{\tau=0}^{t-1} \term{\eta^{(\tau)}}^2}.
  \]
  Taking the logarithm on both sides, and dividing by $\nrm{x^{(t)}}_1 = \nrm{y^{(t)}}_1 = \sum_{\tau=1}^t \eta^{(t)}$ we get that
  \begin{equation}\label{eq:bound}
    \forall i\in[n] , j\in[m]: \term{Ay^{(t)} / \nrm{y^{(t)}}_1 }_i - \term{A^Tx^{(t)} / \nrm{x^{(t)}}_1 }_j \leq \frac{\ln\left(\frac{nm}{\delta^{(t)}}\right)+3\sum_{\tau=1}^t \term{\eta^{(t)}}^2 }{\sum_{\tau=1}^t \eta^{(t)}}.
  \end{equation}
  
  Until now every step works for both choices of $\eta^{(t)}$. First we finish the proof of the infinitely running version of the algorithm, where we choose $\delta^{(t)}:=\frac{\delta}{2t^2}$.
  Using the bounds $\sum_{\tau=1}^t \term{\eta^{(t)}}^2  = \frac{1}{4}\sum_{\tau=1}^t \frac{1}{\tau} \leq \frac{\ln(t)+1}{4}$ 
   %Double checked by Mathematica: {Table[{t, Table[1/tau, {tau, 1, t}] // Total}, {t, 1, 10}], Table[{t, 1 + Log[t]}, {t, 1, 10}]} // ListPlot
   and $\sum_{\tau=1}^t \eta^{(t)}  = \frac{1}{2}\sum_{\tau=1}^t \sqrt{\frac{1}{\tau}} \geq \sqrt{t}/2$ 
   %Double checked by Mathematica: {Table[{t, Table[1/tau, {tau, 1, t}] // Total}, {t, 1, 10}], Table[{t, 1 + Log[t]}, {t, 1, 10}]} // ListPlot
we find that, with probability at least $1-\frac{\delta}{2t^2}$, we have for \eqref{eq:bound}:
\begin{align*}
     \forall i\in[n] , j\in[m]:\term{Ay^{(t)} / \nrm{y^{(t)}}_1}_i-\term{A^Tx^{(t)} / \nrm{x^{(t)}}_1 }_j 
     &\leq \frac{\ln( 2\delta^{-1} t^2 nm )+(\ln(t)+1)}{\sqrt{t}/2} \\
     &\leq \frac{2}{\sqrt{t}} \cdot\left(3\ln(t) + \ln(nm) + \ln(1/\delta) + 2\right).
\end{align*}

  Taking the union bound over the error probabilities we have that the total error probability over all iterations is less than $\sum_{t\in \mathbb{N}} \frac{\delta}{2t^2} = \frac{\delta}{2} \cdot \frac{\pi^2}{6} \leq \delta$.
  
  Now we finish the analysis of the fixed-error version of the algorithm choosing $\eta^{(t)}:=\eps/4$ and $\delta^{(t)}:=\delta$. We can then, with probability at least $1-\delta$, bound \eqref{eq:bound} as
	\begin{align*}
	\forall i\in[n] , j\in[m]:\term{Ay^{(t)} / \nrm{y^{(t)}}_1 }_i - \term{A^Tx^{(t)} / \nrm{x^{(t)}}_1 }_j 
	&\leq \frac{\ln\left(\frac{nm}{\delta}\right)+3t\eps^2/16 }{t\eps/4}=\frac{3}{4}\eps + \frac{4\ln\left(\frac{nm}{\delta}\right)}{t\eps}.
	\end{align*}   
	For $t\geq \frac{16\ln\left(\frac{nm}{\delta}\right)}{\eps^2}$ this can be further upper bounded by $\eps$.
\end{proof}

Once we get the solutions we can estimate the value of the game by simply playing the games with the corresponding randomized strategies. Since in each run the game has bounded value, by Chernoff's bound we get the following:
\begin{claim}
	Given a pair of strategies $x,y$, let us take $k=\bigO{1/\eps^2}$ independent samples $i_1,\ldots,i_k$ from $x$ and similarly $j_1,\ldots,j_k$ from $y$.
	Then with high probability $\sum_{\ell=1}^k A_{i_\ell,j_\ell}/k$ is an $\eps$-approximate estimate of the value of the game corresponding to these strategies.
\end{claim}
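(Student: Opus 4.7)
The plan is straightforward: recognize that the sample average $\sum_{\ell=1}^k A_{i_\ell,j_\ell}/k$ is an empirical mean of i.i.d.\ bounded random variables whose expectation is exactly the quantity we want to estimate, and then apply a standard concentration inequality.

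First I would observe that since the pairs $(i_\ell, j_\ell)$ are drawn independently and $i_\ell \sim x$, $j_\ell \sim y$ are themselves independent, for each $\ell$ we have
\[
  \mathbb{E}[A_{i_\ell, j_\ell}] = \sum_{i\in[n]}\sum_{j\in[m]} x_i y_j A_{ij} = x^T A y,
\]
which is by definition the value of the game corresponding to the strategy pair $(x,y)$. Moreover, since $A_{ij} \in [-1,1]$ by assumption, each random variable $A_{i_\ell,j_\ell}$ is bounded in $[-1,1]$.

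Next I would apply Hoeffding's inequality (a standard Chernoff-type bound for bounded i.i.d.\ random variables) to the $k$ independent samples $A_{i_1,j_1}, \ldots, A_{i_k,j_k}$. This yields
\[
  \Pr\mleft[\mleft|\frac{1}{k}\sum_{\ell=1}^k A_{i_\ell,j_\ell} - x^T A y\mright| > \eps\mright] \leq 2\exp\mleft(-\frac{k\eps^2}{2}\mright),
\]
so choosing $k = \bigO{\log(1/\delta)/\eps^2}$ makes this failure probability at most $\delta$. Taking $\delta$ to be inverse polynomial in the relevant parameters gives the ``with high probability'' guarantee while preserving the $k = \bigO{1/\eps^2}$ dependence up to logarithmic factors.

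There is no real obstacle here; the only thing worth double-checking is that each individual sample truly has the claimed expectation, which requires the independence of $i_\ell$ from $j_\ell$ within each pair (so that the joint distribution factorizes as $x_i y_j$), and that distinct pairs are independent of one another (so that Hoeffding's inequality applies to the sum). Both independence assumptions are stated in the claim, so the proof reduces to invoking Hoeffding directly.
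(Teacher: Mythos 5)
Your proof is correct and matches the paper's approach: the paper simply invokes a Chernoff-type bound on the i.i.d.\ bounded samples $A_{i_\ell,j_\ell}$, which is precisely what you do via Hoeffding's inequality after checking that each sample has expectation $x^T A y$.
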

\noindent This clearly shows that obtaining the approximate solutions via Algorithm~\ref{alg:main} dominates the complexity of approximately computing the corresponding value.

When access to $A$ is given by an oracle which can be queried for any matrix element of $A$, then the following lemma gives a bound on the cost of the algorithm.
\begin{lemma}\label{lemma:DenseClassical}
  Algorithm~\ref{alg:main} can be implemented to find an $\eps$-optimal pair of solutions on a classical randomized computer with probability at least $1-\delta$ in $\frac{16}{\eps^2} \ln\left(\frac{nm}{\delta}\right)$ iterations, using
  $\bigO{n+m}$ time per iteration and $n+m$ queries to the entries of $A$.
\end{lemma}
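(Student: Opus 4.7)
The plan is to invoke the fixed-accuracy-goal part of Lemma~\ref{lem:correctness} with $\eta^{(t)} := \eps/4$, which directly yields both the iteration count $T = \lceil 16 \ln(nm/\delta)/\eps^2 \rceil$ and the $1-\delta$ success probability for $\eps$-optimality. What then remains is to exhibit an implementation in which each iteration of the body of Algorithm~\ref{alg:main} uses exactly $n+m$ queries to the entries of $A$ and $\bigO{n+m}$ arithmetic operations.

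The central observation is that the updates are sparse: from one iteration to the next, $y^{(t)}$ (resp.\ $x^{(t)}$) changes in exactly one coordinate, namely the sampled index $a$ (resp.\ $b$). I would therefore maintain the vectors $u^{(t)} = -A^T x^{(t)}$ and $v^{(t)} = A y^{(t)}$ incrementally across iterations, using the identities $v^{(t+1)} = v^{(t)} + \eta^{(t)} A_{\cdot,a}$ and $u^{(t+1)} = u^{(t)} - \eta^{(t)}\,(A_{b,\cdot})^T$. These updates read only the $a$-th column and the $b$-th row of $A$, costing $n$ and $m$ queries respectively, for a total of exactly $n+m$ queries per iteration. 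This is the one place that matters: recomputing the matrix-vector products from scratch would cost $\Omega(nm)$, so all of the savings come from recognizing that the two correction terms have rank one.

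Given the updated $u^{(t)}$ and $v^{(t)}$, one obtains $P^{(t)}$ and $Q^{(t)}$ by entrywise exponentiation, the normalizing constants $\nrm{P^{(t)}}_1$ and $\nrm{Q^{(t)}}_1$ by summation, and samples $a \sim p^{(t)}$ and $b \sim q^{(t)}$ by the standard inverse-CDF trick on a precomputed prefix-sum array, each of these substeps costing $\bigO{n+m}$ time and no queries. Adding $\eta^{(t)} e_a$ to $y^{(t)}$ and $\eta^{(t)} e_b$ to $x^{(t)}$ is $\bigO{1}$. There is no real obstacle here; combining this per-iteration cost with the iteration bound from Lemma~\ref{lem:correctness} yields the stated complexity.
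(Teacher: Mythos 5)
Your proof is correct and takes essentially the same approach as the paper's: invoke the fixed-accuracy version of Lemma~\ref{lem:correctness} for the iteration bound, observe that only one coordinate of $x$ (resp.\ $y$) changes per iteration so that $u$ and $v$ can be updated with one row and one column read ($m+n$ queries total), and then recompute $P,Q$, their normalizers, and a cumulative-distribution array in $\bigO{n+m}$ time per iteration for inverse-CDF sampling. The paper states the same idea more tersely (``the change in $u$ can be computed using $m$ queries''); your write-up just makes the rank-one update identities explicit, but the argument is identical.
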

\begin{proof}
 The iteration bound follows from Lemma~\ref{lem:correctness}.
 In the rest of the proof we drop the $(t)$ subscript for ease of notation. Since in each iteration only one entry of $x$ changes, the change in $u$ can be computed using $m$ queries. It then requires $\bigO{m}$ work to compute $P$ and $p$. In the same amount of time the cumulative distribution corresponding to $p$ can be calculated. By generating a random number between $0$ and $1$ up to high precision and performing binary search on the cumulative distribution we can sample $a$ from $p$. A similar approach works for $y$, $v$, $Q$, $q$ and $b$.
\end{proof}
When $A$ is given by a \emph{sparse oracle} which also allows querying for any $j$ the location of the $j$-th non-zero entry in each row and column, then a further speedup is possible.

\begin{lemma}\label{lemma:SparseClassical}
Let $s$ be the maximum number of non-zero entries in a row, and $d$ the maximum number of non-zero entries in a column of $A$.
 Algorithm~\ref{alg:main} can be used to find an $\eps$-optimal pair of solutions on a classical randomized computer with probability at least $1-\delta$ in $\frac{16}{\eps^2} \ln\left(\frac{nm}{\delta}\right)$ iterations,
  $\bigO{s\ln(m) +d\ln(n)}$ time per iteration and $s+d$ queries to a sparse oracle for $A$.
\end{lemma}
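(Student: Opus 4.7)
The iteration count is inherited directly from Lemma~\ref{lem:correctness}, so the only task is to implement a single iteration in $\bigO{s\ln m + d\ln n}$ time using $s+d$ sparse queries. I will drop the superscript $(t)$ for readability. The central observation is the same as in the dense case: between consecutive iterations only one coordinate of $x$ changes (namely the sampled index $b$, by an additive amount $\eta^{(t)}$) and analogously only one coordinate of $y$ changes (the sampled index $a$). Consequently $u=-A^Tx$ changes only at the positions of the nonzero entries of row $b$ of $A$, of which there are at most $s$, and $v=Ay$ changes only at the positions of the nonzero entries of column $a$ of $A$, of which there are at most $d$. With a sparse oracle these positions and values can be located in $s$ and $d$ queries respectively, which already accounts for the claimed query complexity.

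The remaining issue is how to perform the sampling steps without touching the whole vectors $P$ and $Q$. The plan is to maintain $P\in\R^m$ and $Q\in\R^n$ implicitly through two segment trees (one indexed over $[m]$, one over $[n]$): each leaf stores the current value $P_j$ or $Q_i$, and each internal node stores the sum of the values of its subtree, so that the root of the $P$-tree holds $\nrm{P}_1$. Changing a single leaf value propagates along $\bigO{\ln m}$ ancestors to the root, so the $s$ updates to $P$ induced by a step cost $\bigO{s\ln m}$ arithmetic operations, and symmetrically $\bigO{d\ln n}$ for $Q$. To draw $a\sim p=P/\nrm{P}_1$ one can draw a uniform $r\in[0,\nrm{P}_1)$ and descend from the root of the $P$-tree in $\bigO{\ln m}$ steps, at each node branching left or right according to whether $r$ is smaller than the left-subtree sum, then adjusting $r$ accordingly; an identical procedure on the $Q$-tree samples $b$.

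Putting the pieces together, a single iteration requires reading the at most $s$ entries of row $b$ and the at most $d$ entries of column $a$ through the sparse oracle, performing $\bigO{s\ln m}$ work to update the $P$-tree, $\bigO{d\ln n}$ work to update the $Q$-tree, and $\bigO{\ln m+\ln n}$ extra work for the two sampling traversals. This matches the claimed per-iteration bound of $\bigO{s\ln m + d\ln n}$ time and $s+d$ oracle queries. The only step that is conceptually more than bookkeeping is the design of a data structure that supports both $\bigO{\ln\cdot}$-time pointwise updates and $\bigO{\ln\cdot}$-time weighted sampling simultaneously, which is precisely what the segment-tree construction above provides.
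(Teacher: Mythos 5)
Your proof is correct and follows essentially the same approach as the paper: observe that a single iteration changes only one coordinate of $x$ and one of $y$, so $u$ and $v$ change in at most $s$ and $d$ positions respectively, and maintain $P$ and $Q$ in segment trees supporting $\bigO{\ln\cdot}$ pointwise updates and $\bigO{\ln\cdot}$ weighted sampling. The only detail the paper adds that you omit is that the trees should be built lazily (nodes allocated only as they are first touched) so that no $\Theta(n+m)$ initialization cost is incurred; this does not affect the per-iteration bound stated in the lemma but is needed for the algorithm to remain sublinear overall.
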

\begin{proof}
  The iteration bound follows from Lemma~\ref{lem:correctness}. % as in Lemma~\ref{lemma:DenseClassical}. 
  In the rest of the proof we drop the $(t)$ subscript for ease of notation.
  Now, since there is a sparse oracle for the rows of $A$, the change in $u$ can be computed using $s$ queries and $\bigO{s\ln(m)}$ time.
  Hence, also the multiplicative change in $P$ can be calculated in $\bigO{s\ln(m)}$ time.

  Now, instead of keeping $P$ as a list, we keep it as a tree with the sum of the branches stored at each leaf, see Figure~\ref{fig:tree} for an example.
Now, updating one leaf can be done by walking the tree upward and changing each node as required, which takes $\bigO{\ln(m)}$ time.
Hence, $P$ can be updated in $s\ln(m)$ time.
Given a tree structure for the vector $P$, sampling from $P/\nrm{P}$ can easily be implemented: walk down the tree, going left or right with probabilities proportional to the node values.
Finally we note that there is no need to initialize a full-sized empty tree, the tree can be built during the run of the algorithm.

  A similar approach works for $y$, $v$, $Q$, $q$ and $b$ and the lemma statement follows.
\end{proof}

\begin{figure*}[tbp]
  \begin{center}
\begin{tikzpicture}[level 1/.style={sibling distance=10em},level 2/.style={sibling distance=5em},
  every node/.style = {shape=circle, rounded corners,
    draw, align=center,
    top color=white, bottom color=blue!20}]]
  \node {10}
  child { node {3}
      child { node {1} }
      child { node {2} }
  }
    child { node {7}
      child { node {3} }
      child { node {4} }
    };
  \end{tikzpicture}
  \end{center}
\caption{Tree structure for the vector $(1,2,3,4)$}
\label{fig:tree}
\end{figure*}
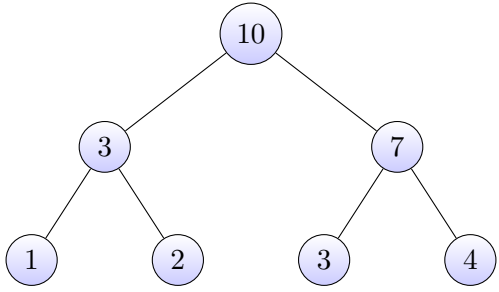

\section{Quantum implementation of the algorithm}
\label{sec:quantum}

In the quantum case we aim at a sublinear-time algorithm in $m$ and $n$, which strictly speaking cannot even read through a single column or row of $A$.
For this we need to assume that we can make quantum queries\footnote{A quantum oracle is a unitary that acts analogously to the classical oracle on basis states, but allows making queries in superposition; we also assume access to the inverse of such a unitary. For more details see, e.g.,~\cite{berry2012BlackHamSimUnitImp}. } to the oracles describing $A$. 
Also observe that the algorithm always maintains $x^{(t)}$ and $y^{(t)}$ that have at most $t$ non-zero elements.
Therefore we can efficiently store $x^{(t)}$ and $y^{(t)}$ during the algorithm and we will only work implicitly with the non-sparse vectors $v,u,p$ and $q$.
We store $x^{(t)}$ in a tree-like data structure in QCRAM, similar to the structure discussed at the end of the last section, see in Figure~\ref{fig:tree}.
This enables us to query elements of $x$ and sample from $i\in [n]$ with probability $\frac{x_i}{\nrm{x}_1}$, both in time $\bigO{\log(n)}$.
Moreover, it enables the preparation of the quantum state $\sum_{i\in[n]}\sqrt{x_i/\nrm{x}_1}\ket{i}$ in time $\bigO{\log(n)}$, as discussed in~\cite{grover2002SuperposEffIntegrProbDistr,kerenidis2016QRecSys}.
Implementing one iteration of Algorithm~\ref{alg:main} essentially boils down to efficient Gibbs-sampling, i.e., implementing sampling from the distribution $\bm{e}^{x^T A}/{\lVert \bm{e}^{x^T A}\rVert}_1$.

\begin{definition}[Classical Gibbs distribution]\label{def:Gibbs}
	For a vector $v\in \R^n$ let $\bm{e}^v$ denote the vector whose $i$-th coordinate is $\bm{e}^{v_i}$. Then $G(v):=\frac{\bm{e}^v}{\nrm{\bm{e}^v}_1}$ denotes the \emph{Gibbs distribution} corresponding to $v$.
\end{definition}

First we broadly describe the classical analogue of the quantum process used in Sections~\ref{subsec:qdense}.
Following this we will state a few useful technical results in Section~\ref{subsec:technical} and then in Sections~\ref{subsec:qdense}-\ref{subsec:qsparse} we finally show how to implement the process quadratically more efficiently on a quantum computer.

We will focus on the $t$-th iteration of Algorithm~\ref{alg:main}.
We assume that $x^{(t)}$ is $t$-sparse and is stored in the tree data structure, our goal is to sample from $G(x^TA)$.
Since $x$ is $t$-sparse we can compute a single element of $u = x^TA$ using $\bOt{t}$ steps.
Since we have a procedure to calculate $u_j$, we can find the maximal element $u_j$ using $\bigO{m}$ calls to this procedure.
Call this maximum $u_{\max}$.
We will sample from $G(x^T A - u_{\max}e) = G(x^TA)$ since this allows us to use that $\bm{e}^{u_j - u_{\max}}\leq 1$ for all $j$.
To do this sampling we will use rejection sampling: sample $j\in [m]$ uniformly at random, then with probability $\bm{e}^{u_j - u_{\max}}$ output $j$ and with probability $1-\bm{e}^{u_j - u_{\max}}$ output $\bot$.
If we would post-select on the outcome not being $\bot$, then we would have sampled from $G(x^T A - u_{\max}e) = G(x^TA)$.
Hence we repeat this procedure an expected $\bigO{\frac{m \bm{e}^{u_{\max}}}{\sum_{j=1}^m \bm{e}^{u_j}}} \leq \bigO{m}$ times until we get an output other than $\bot$, resulting in complexity $\bOt{mt}$.

All the steps described above have quadratically faster quantum counterparts.
Estimating a single value $u_j$ can be done via amplitude estimation, finding the maximum can be achieved using the maximum-finding algorithm by Dürr and Hoyer~\cite{durr1996QMinimumFinding} and rejection sampling can be done in $\bigO{\sqrt{m}}$ steps via amplitude amplification.

%As a first step we compute the maximal element $u_j\colon j\in [m]$. This already costs $\sim Tm$ queries, since computing $u_j$ already takes $\sim T$ queries. However the cost can be quadratically improved in terms of $m$ using the well known quantum minimum/maximum finding algorithm of Dürr and Hoyer~\cite{durr1996QMinimumFinding}. Then classically we would do the following: sample $j\in [m]$ uniformly at random, and then accept $j$ with probability $\frac{\bm{e}^{u_j}}{\bm{e}^{u_{\max}}}$. The acceptance probability can be computed by computing $u_j$, which takes time $\bigO{T}$. It is easy to see that this implements the necessary sampling operation of $G(u)$, but unfortunately the success probability of a single rejection sampling trial can be as low as $\bigO{\frac{1}{m}}$, therefore we might need $\bigO{m}$ trials. However, if we use a quantum computer to implement the procedure and then use amplitude amplification, then we can sample using an expected number of $\bigO{\sqrt{m}}$ ``quantum repetitions''. In Lemma~\ref{lem:GibbsDense} we show how to achieve a quadratic improvement in $T$ also which requires a bit more work, and in Lemma~\ref{lem:GibbsSparse} we show how to adapt the procedure for sparse matrices.

\subsection{Technical Lemmas}\label{subsec:technical}
In order to describe our quantum algorithms succinctly we introduce some technical results about block-encodings.
A unitary $U$ is an $a$-qubit block-encoding of $A$ if the top-left block of the unitary~$U$ is $A$:
\[
  A= \left(\bra{0}^{\otimes a}\otimes I\right) U \left(\ket{0}^{\otimes a}\otimes I\right)\Longleftrightarrow U= \left[\begin{array}{cc} A & .
\\ .
& .
   \end{array}\right].
\]
One can think of this as a probabilistic implementation of the linear transformation $A$: given an input state $\ket{\psi}$, applying the unitary $U$ to the state $\ket{0}^{\otimes a}\ket{\psi}$, measuring the first $a$-qubit register and post-selecting on the $\ket{0}^{\otimes a}$ outcome, we get a state $\propto A\ket{\psi}$ in the second register.
We will often just say that we apply the matrix $A$ to a quantum state $\ket{\psi}$ by which we mean that we take the state $\ket{0}^{\otimes a}\ket{\psi}$, where $\ket{0}^{\otimes a}$ are fresh ancilla qubits, and we apply the block-encoding $U$ to the state $\ket{0}^{\otimes a}\ket{\psi}$, and treat the ancilla states $\ket{0}^{\otimes a}$ as an indicator of success.
For simplicity we will denote $\ket{0}^{\otimes a}$ by $\ket{\bar{0}}$, and in all our applications we will have $a=\bigO{\log(m+n)}$.
We will use the following result about transforming eigenvalues of block-encodings~\cite{low2017HamSimUnifAmp,gilyen2018QSingValTransf}:
\begin{theorem}[Polynomial eigenvalue transformation~{\cite[Theorem~56]{gilyen2018QSingValTransf}}]\label{thm:polyTransf}
	Suppose that $U$ is an $a$-qubit block-encoding of a Hermitian matrix $A$, and $P\in\R[x]$ is a degree-$d$ polynomial satisfying that
	\begin{enumerate}[label=(\roman*)]
		\item for all $x\in[-1,1]\colon$ $|P(x)|\leq \frac12$, or \label{it:noParity}
		\item for all $x\in[-1,1]\colon$ $|P(x)|\leq 1$ and $|P(x)|=|P(-x)|$.
	\end{enumerate}
	Then there is a quantum circuit $\tilde{U}$, which is an $(a+2)$-qubit block-encoding of $P(A)$, and which consists of $d$ applications of $U$ and $U^\dagger$, (and in case~\ref{it:noParity} a single application of controlled-$U^{\pm1}$) and $\bigO{(a+1)d}$ other one- and two-qubit gates.\footnote{Given the polynomial $P$ we can compute the gates for implementing an approximate polynomial $\tilde{P}$ such that $|P-\tilde{P}|\leq \xi$ for all $x\in[-1,1]$ in time $\bigO{d^3\polylog{d/\xi}}$ as shown by Haah~\cite{haah2018ProdDecPerFuncQSignPRoc}.}
\end{theorem}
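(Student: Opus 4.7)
The plan is to prove the theorem via the qubitization and quantum-signal-processing (QSP) framework pioneered by Low--Chuang. The core insight is that, even though $U$ is some complicated unitary on the ancilla-plus-system register, its action restricted to the $\ket{\bar{0}}$-block is encoded by the Hermitian matrix $A$, and this structure forces $U$ to decompose into $2\times 2$ invariant subspaces on which it acts as a simple rotation parametrised by the eigenvalues of $A$. Once we have this decomposition, implementing $P(A)$ reduces to implementing a scalar polynomial on each eigenvalue, which is precisely what QSP is designed to do.

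First I would spectrally decompose $A=\sum_{\lambda} \lambda\ket{v_\lambda}\bra{v_\lambda}$ and write $U\ket{\bar{0}}\ket{v_\lambda} = \lambda\ket{\bar{0}}\ket{v_\lambda} + \sqrt{1-\lambda^2}\ket{\perp_\lambda}$, where $\ket{\perp_\lambda}$ is some state with the ancilla not in $\ket{\bar{0}}$. The qubitization step shows (after possibly multiplying by a fixed $\pm1$-phase reflection, which is why case~\ref{it:noParity} needs one extra controlled-$U$) that the pair $\{\ket{\bar{0}}\ket{v_\lambda},\ket{\perp_\lambda}\}$ spans a $2$-dimensional invariant subspace on which $U$ acts as the rotation $\bigl[\begin{smallmatrix}\lambda & -\sqrt{1-\lambda^2}\\ \sqrt{1-\lambda^2} & \lambda\end{smallmatrix}\bigr]$. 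The crucial gadget is then the phased projection $\Pi_\phi := e^{i\phi(2\Pi-I)}$ with $\Pi=\ketbra{\bar{0}}{\bar{0}}\otimes I$; since checking whether the ancilla register equals $\ket{\bar{0}}$ takes $\bigO{a}$ elementary gates, each $\Pi_\phi$ costs $\bigO{a}$ gates and, inside each $2$-dimensional invariant subspace, acts as a $Z$-axis rotation by $\phi$.

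The second step is to appeal to the QSP theorem (Low--Chuang, Gily\'en--Su--Low--Wiebe): for any real polynomial $P$ of degree $d$ satisfying the bound and parity hypotheses, there exist phases $\phi_0,\phi_1,\ldots,\phi_d\in\R$ such that the alternating product $\Pi_{\phi_0}\,U\,\Pi_{\phi_1}\,U^\dagger\cdots$, restricted to each $2$-dimensional invariant subspace, implements a $2\times 2$ unitary whose top-left entry equals $P(\lambda)$. Reassembling across all invariant subspaces, the same circuit is an $(a+2)$-qubit block-encoding of $P(A)$. Counting applications: we use $d$ queries to $U$ and $U^\dagger$ (plus one controlled-$U^{\pm 1}$ in case~\ref{it:noParity} to handle the missing parity), together with $d+1$ phased-projection gadgets, giving the announced $\bigO{(a+1)d}$ additional one- and two-qubit gates.

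The main obstacle is the QSP existence statement itself: characterising exactly which polynomials are achievable by the alternating phase sequence and producing the angles $\phi_k$. The classical proof of this proceeds by induction on $d$, using a normalisation $|P(x)|^2 + (1-x^2)|Q(x)|^2 = 1$ to peel off one phase at a time, and the parity condition in case~\ref{it:noParity} versus \ref{it:noParity} controls whether a definite-parity $P$ can be realised directly or only after absorbing an extra $U$-query. A secondary subtlety is the normalisation: case~\ref{it:noParity} asks $|P|\leq 1/2$ rather than $|P|\leq 1$ so that one may write $P = (P_{\text{even}}+P_{\text{odd}})/2$ with both parts having definite parity and norm bounded by $1$, and then implement them via a LCU-style controlled combination, which accounts for the extra qubit in the $(a+2)$-qubit block-encoding.
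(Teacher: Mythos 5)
The paper does not prove this theorem --- it imports it verbatim (with citation) as Theorem~56 of Gily\'en, Su, Low and Wiebe, so there is no in-paper proof to compare against; I therefore assess your sketch on its own.

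Your outline correctly captures the high-level architecture of quantum signal processing / QSVT, but two points are off. First, the pair $\{\ket{\bar{0}}\ket{v_\lambda},\ket{\perp_\lambda}\}$ is in general \emph{not} a $U$-invariant two-dimensional subspace: a generic block-encoding $U$ sends $\ket{\perp_\lambda}$ outside this span (it is $U^\dagger\ket{\bar{0}}\ket{v_\lambda}$, not $\ket{\perp_\lambda}$, that $U$ returns to $\ket{\bar{0}}\ket{v_\lambda}$, and these need not coincide). What the actual argument uses is Jordan's lemma applied to the two projectors $\Pi:=\ketbra{\bar{0}}{\bar{0}}\otimes I$ and $U^\dagger\Pi U$: the space splits into one- and two-dimensional subspaces jointly invariant under this pair of projectors, and on each two-dimensional block the alternating product $e^{i\phi_0(2\Pi-I)}\,U\,e^{i\phi_1(2\Pi-I)}\,U^\dagger\cdots$ acts as a QSP sequence in the associated singular value; $U$ itself need not be block-diagonal. (Low--Chuang qubitization does construct a block-diagonal iterate, but that is a \emph{different} unitary from $U$, with a different cost accounting --- your sketch slides between the two pictures.) Second, you misattribute the extra controlled-$U^{\pm1}$ in case~\ref{it:noParity} to ``fixing a phase reflection for qubitization''; in fact it arises from the linear-combination-of-unitaries step that coherently combines the even- and odd-parity parts --- exactly what you correctly describe in your final paragraph, contradicting the earlier attribution. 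There the arithmetic should read $P=P_{\mathrm{even}}+P_{\mathrm{odd}}$ (not their average); the hypothesis $|P|\le 1/2$ guarantees $|2P_{\mathrm{even}}|,|2P_{\mathrm{odd}}|\le 1$, each of definite parity and degree at most $d$, so each admits a QSP realisation, and LCU with equal weights $\tfrac12,\tfrac12$ then yields $P(A)$ exactly. Crucially, the $U/U^\dagger$ calls of the two branches are merged rather than run twice, with a single leftover application made controlled to absorb the even/odd parity mismatch --- this is what produces ``$d$ applications of $U,U^\dagger$ plus one controlled-$U^{\pm1}$'' rather than $2d$.
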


Finally we invoke some polynomial approximation results which are explicitly or implicitly proven in~\cite{gilyen2018QSingValTransf}.
\begin{lemma}[Polynomial approximations]\label{lem:polyApx}
	Let $\beta\geq 1$, $\xi\leq 1/2$ and $t\in[0,1-\frac{1}{\beta}]$. There exist polynomials $\tilde{P},\tilde{Q},\tilde{R}$ such that
	\begin{itemize}
		\item $\forall x\in[-1,1]\colon|\tilde{P}(x)|\leq \frac12$ and for all $x\in[-1,0]\colon$ $|\tilde{P}(x)-\bm{e}^{\beta x}/4|\leq \xi$,
		\item $\forall x\in[-1,1]\colon |\tilde{Q}(x)|=|\tilde{Q}(-x)|\leq 1$, $\forall x\in [0,t]\colon |\tilde{Q}(x)|\leq\xi$, $\forall x\in [t+\frac{1}{\beta},1]\colon \tilde{Q}(x)\geq1-\xi$,
		\item $\forall x\in[-1,1]\colon |\tilde{R}(x)|=|\tilde{R}(-x)|\leq 1$, and $\forall x\in [\frac{1}{2\beta},1]\colon |\tilde{R}(x)-\frac{1}{\sqrt{4\beta x}}|\leq\xi$,
	\end{itemize}
	moreover $\deg(\tilde{P})$, $\deg(\tilde{Q})$, $\deg(\tilde{R})= \bigO{\beta\log(1/\xi)}$.
\end{lemma}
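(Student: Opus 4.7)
The plan is to produce each of the three polynomials separately, in each case reducing to a standard polynomial approximation fact from~\cite{gilyen2018QSingValTransf} and then doing minor post-processing to enforce the parity and uniform bound constraints. Every construction will give degree $\bigO{\beta\log(1/\xi)}$, matching the claim.

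First I would construct $\tilde{P}$. The exponential $\bm{e}^{\beta x}$ can be approximated on the interval $[-1,0]$ by a truncated Chebyshev (or shifted Taylor) series of degree $d_1=\bigO{\beta\log(1/\xi)}$ with uniform error at most $\xi/2$; call this polynomial $P_1$, and note that on $[-1,0]$ we automatically have $|P_1(x)|\leq 1+\xi/2$, while on $[0,1]$ it grows up to roughly $\bm{e}^{\beta}$. To cut off this growth, I multiply by a smooth rectangle polynomial $P_2$ that approximates the indicator of $[-1,0]$: construct $P_2$ by shifting and rescaling the standard polynomial approximation to $\mathrm{sign}$ (Lemma~25/Corollary~16 of~\cite{gilyen2018QSingValTransf}) so that $P_2(x)\in[0,1]$, $P_2(x)\geq 1-\xi/(4\bm{e}^{\beta})$ on $[-1,0]$, and $|P_2(x)|\leq \xi/(4\bm{e}^{\beta})$ on a suitable right interval; then the product $\tilde{P}(x):=\tfrac14 P_1(x)P_2(x)$ has degree $\bigO{\beta\log(1/\xi)}$, approximates $\bm{e}^{\beta x}/4$ on $[-1,0]$ to error $\xi$, and stays below $\tfrac12$ on all of $[-1,1]$.

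For $\tilde{Q}$ I would use the symmetrized shifted-sign construction. Take the standard odd polynomial approximation $s$ of $\mathrm{sign}$ with resolution $1/(2\beta)$ (so that $|s(x)-\mathrm{sign}(x)|\leq 2\xi$ outside an interval of radius $1/(2\beta)$ around $0$, from~\cite{gilyen2018QSingValTransf}), shift it by $t+\tfrac{1}{2\beta}$, and form $\tilde{Q}(x):=\tfrac12(1+s(|x|-t-\tfrac{1}{2\beta}))$; since we want a polynomial in $x$, I implement $|x|$ implicitly by composing with an even polynomial approximation of $|x|$, or more cleanly by applying the parity-preserving version of Theorem~\ref{thm:polyTransf}. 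This yields an even polynomial satisfying $|\tilde Q(x)|=|\tilde Q(-x)|\leq 1$, $|\tilde Q(x)|\leq \xi$ on $[0,t]$, and $\tilde Q(x)\geq 1-\xi$ on $[t+1/\beta,1]$.

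Finally for $\tilde{R}$ I would invoke the polynomial approximation to the negative-power function: by Corollary~67/Lemma~29 in~\cite{gilyen2018QSingValTransf}, there is a polynomial $R_0$ of degree $\bigO{\beta\log(1/\xi)}$ that approximates $\tfrac{1}{\sqrt{4\beta x}}$ on $[\tfrac{1}{2\beta},1]$ to uniform error $\xi$ and is bounded by $1$ on $[-1,1]$. To enforce the parity symmetry $|\tilde{R}(x)|=|\tilde{R}(-x)|$, I replace $R_0$ by its odd or even part evaluated suitably—most directly, apply the construction to $|x|$ via a polynomial in $x^2$ and then multiply by $x$ if needed, yielding a polynomial with the required symmetry at the cost of at most a constant factor in degree.

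The main obstacle will be keeping the uniform bound $|\tilde{P}|\leq \tfrac12$ on all of $[-1,1]$ while still achieving additive error $\xi$ on $[-1,0]$: the cutoff polynomial $P_2$ has to attenuate values on $[0,1]$ by a factor of order $\bm{e}^{-\beta}$, which forces its own precision parameter to be exponentially small in $\beta$, yet the degree of a sign-approximation polynomial depends only logarithmically on that precision, so the overall $\bigO{\beta\log(1/\xi)}$ degree bound survives. The $\tilde{Q}$ and $\tilde{R}$ constructions are then essentially rote applications of the corresponding Chebyshev approximation bounds.
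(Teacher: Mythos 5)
The paper does not actually prove this lemma; it cites \cite{gilyen2018QSingValTransf} as the source, so there is no in-paper argument to compare against, and I evaluate your proposal on its own merits.

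Your construction of $\tilde{P}$ has a genuine degree gap. Writing $\tilde P=\frac14 P_1 P_2$ with $P_1$ a degree-$d_1$ truncation of $\bm{e}^{\beta x}$ and $P_2$ a sign/rectangle polynomial, the uniform bound $|\tilde P|\le\frac12$ must hold throughout the \emph{transition region} of $P_2$ as well, where $P_2$ is not yet small. Since $P_1(x)\approx\bm{e}^{\beta x}$ near $0$, on a transition window $[0,\kappa]$ you have $|P_1P_2|/4\lesssim\bm{e}^{\beta\kappa}/4$, which already forces $\kappa=\bigO{1/\beta}$. A sign-approximation polynomial has degree $\Theta\bigl(\kappa^{-1}\log(1/\theta)\bigr)$, not merely $\Theta\bigl(\log(1/\theta)\bigr)$; with $\kappa=\Theta(1/\beta)$ and the threshold $\theta=\Theta(\xi\bm{e}^{-\beta})$ that you correctly identify, $\deg(P_2)=\bigO{\beta(\beta+\log(1/\xi))}=\bigO{\beta^2+\beta\log(1/\xi)}$. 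When $\beta\gg\log(1/\xi)$ (which does occur in the intended application, where $\beta\sim 1/\eps$ while $\log(1/\xi)$ is polylogarithmic), this exceeds the claimed $\bigO{\beta\log(1/\xi)}$. Your closing remark that ``the degree of a sign-approximation polynomial depends only logarithmically on that precision'' drops exactly the $1/\kappa$ factor that kills the bound. The fix is to \emph{not} separate approximation and truncation into two multiplied polynomials: use the ``bounded polynomial approximation from a local Taylor series'' machinery in \cite{gilyen2018QSingValTransf} (their Corollary~23 together with the Chebyshev truncation lemma), applied at a center such as $x_0=-1/2$ with radius $r=1/2$ and slack $\delta=\Theta(1/\beta)$; because $\bm{e}^{\beta x}/4$ already has $\ell^1$-Taylor norm $B\le\frac14\bm{e}^{1/2}<\frac12$ on $[x_0-r-\delta,x_0+r+\delta]$, one directly obtains a degree-$\bigO{\beta\log(1/\xi)}$ polynomial that is $\xi$-close on $[-1,0]$ and bounded by $\frac12$ everywhere, with no separate cutoff factor.

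Two smaller issues. For $\tilde{Q}$ you write $s(|x|-t-\frac{1}{2\beta})$ and propose to ``implement $|x|$'' by composing with an even polynomial approximation of $|x|$ or ``by the parity-preserving version of Theorem~\ref{thm:polyTransf}''; the former inflates the degree nontrivially (the composition does not stay $\bigO{\beta\log(1/\xi)}$ near $0$ without care), and the latter is a statement about block-encodings, not a way to manufacture a polynomial. The standard route is the algebraic symmetrization $\tilde Q(x)=\tfrac12\bigl(1-s(a-x)\,s(a+x)\bigr)$ with $a=t+\frac{1}{2\beta}$ and $s$ an odd, $[-1,1]$-bounded sign polynomial of resolution $\frac{1}{2\beta}$; this is automatically even, bounded in $[0,1]$, and of degree $\bigO{\beta\log(1/\xi)}$. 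For $\tilde{R}$, plugging $y=x^2$ moves the singular endpoint from $\frac{1}{2\beta}$ to $\frac{1}{4\beta^2}$, and the degree of negative-power approximation scales inversely with that endpoint, so a naive $S(x^2)$ or $x\,S(x^2)$ substitution risks degree $\bigO{\beta^2\log(1/\xi)}$; instead, one should invoke the negative-power approximation (Corollary~67 in \cite{gilyen2018QSingValTransf}) directly on $[\frac{1}{2\beta},1]$, whose construction already yields the required parity.
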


\subsection{Dense matrices}\label{subsec:qdense}

\begin{lemma}[Sampling from the Gibbs distribution of a linear combination of rows]\label{lem:GibbsDense}
 	Suppose that $x\in \R^n$ is stored in a data structure as in Figure~\ref{fig:tree}, such that\footnote{If $\nrm{x}_1\leq 1$ then the Gibbs distribution is very close to uniform so we can Gibbs sample in time $\bOt{1}$.} $\nrm{x}_1\leq \beta$ for some $\beta\geq 1$. If we have quantum query access to the matrix elements of $A\in [-1,1]^{n\times m}$, then we can sample from a distribution that is $\delta$-close to $G(x^T A)$ in total variation distance, in query and time complexity $\bOt{\beta\sqrt{m}}$ on a quantum computer with QCRAM.
\end{lemma}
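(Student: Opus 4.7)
The plan is to execute the classical rejection-sampling sketch given in the prose preceding the lemma, replacing each step by its quantum analogue using the block-encoding framework of Section~\ref{subsec:technical}. The main ingredients will be (i) a block-encoding whose diagonal contains the values $u_j=(x^TA)_j$, (ii) quantum maximum-finding to estimate $u_{\max}$, (iii) Theorem~\ref{thm:polyTransf} together with the polynomial $\tilde P$ of Lemma~\ref{lem:polyApx} to exponentiate the shifted diagonal, and (iv) amplitude amplification to output the sample.

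\emph{Building the block-encoding.} I first construct a unitary $U$ that block-encodes the diagonal operator $D=\diag(u)/\nrm{x}_1$ on the $[m]$-register. Concretely, on input $\ket{j}\ket{\bar 0}$ the circuit $U$ prepares $\sum_i\sqrt{x_i/\nrm{x}_1}\ket{i}$ in an ancilla register (which takes $\bigO{\log n}$ gates using the tree data structure, as recalled in Section~\ref{sec:quantum}), queries $A$ once to rotate another ancilla qubit by an angle whose cosine encodes $A_{ij}$, and finally applies the inverse preparation on the $i$-register. The amplitude on $\ket{\bar 0}$ is $\sum_i (x_i/\nrm{x}_1)A_{ij}=u_j/\nrm{x}_1$, so that rescaling by the known factor $\nrm{x}_1/\beta\in(0,1]$ gives an $\bigO{\log(nm)}$-qubit block-encoding of $D'=\diag(u)/\beta\in[-1,1]^{m\times m}$, using $\bigO{1}$ queries to $A$ per application.

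\emph{Estimating $u_{\max}$.} Amplitude estimation applied to $U$ returns an estimate of any fixed $u_j$ to additive precision $\xi$ using $\bOt{\beta/\xi}$ calls to $U$. Plugging this subroutine into the D\"urr--H\o yer minimum-finding algorithm produces $\tilde u_{\max}$ with $|\tilde u_{\max}-u_{\max}|\leq \xi$ in $\bOt{\sqrt m\cdot\beta/\xi}$ queries to $A$. I choose $\xi=\Theta(\delta/\polylog{mn\beta/\delta})$, so this stage costs $\bOt{\beta\sqrt m}$.

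\emph{Exponentiating and amplifying.} A trivial LCU combination of $U$ with the identity (scaled by $\tilde u_{\max}/\beta$) yields a block-encoding of the shifted operator $D''=\diag(u_j-\tilde u_{\max})/(2\beta)$, whose spectrum lies in $[-1,0]$ up to the $\xi$-error in $\tilde u_{\max}$. Applying Theorem~\ref{thm:polyTransf} with the polynomial $\tilde P$ of Lemma~\ref{lem:polyApx} (at parameter $\beta$ and precision $\xi$) I obtain, via $\bOt{\beta}$ queries, a block-encoding $V$ of $\diag(\tilde P(D''))$ that is $\xi$-close entry-wise to $\diag(\bm{e}^{(u_j-\tilde u_{\max})/2}/4)$. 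I then prepare the uniform superposition $m^{-1/2}\sum_j\ket{j}$ and apply $V$. The amplitude on $\ket{\bar 0}\ket{j}$ becomes proportional to $\bm{e}^{(u_j-\tilde u_{\max})/2}$, so conditioning the measurement of the ancilla register on the outcome $\ket{\bar 0}$ yields index $j$ with probability $\propto \bm{e}^{u_j-\tilde u_{\max}}\propto \bm{e}^{u_j}$, i.e., a sample from $G(x^TA)$. The success probability of this conditioning is $Z/(16m)$ with $Z=\sum_j \bm{e}^{u_j-\tilde u_{\max}}\geq 1$, so $\bOt{\sqrt{m/Z}}=\bOt{\sqrt m}$ rounds of (fixed-point) amplitude amplification boost it to $\Omega(1)$. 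Since each round invokes $V$ and thus uses $\bOt{\beta}$ queries, the overall query and time complexity is $\bOt{\beta\sqrt m}$, matching the claim.

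\emph{Main obstacle.} The trickiest part is the error bookkeeping: perturbations in $\tilde u_{\max}$ and in the polynomial approximation $\tilde P$ propagate into the entries of the final diagonal and hence into the output distribution. Using $|\bm{e}^{a}-\bm{e}^{b}|\leq \bm{e}^{\max(a,b)}|a-b|$ together with the uniform bound $\bm{e}^{u_j-\tilde u_{\max}}\leq 1+\bigO{\xi}$, the entry-wise errors translate into an $\bigO{\xi\log(m))}$ total-variation error on the output; choosing $\xi=\Theta(\delta/\polylog{mn\beta/\delta})$ brings this below $\delta$, and the resulting $\polylog$ overheads are absorbed by the $\bOt{\cdot}$ notation. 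A secondary issue is that amplitude amplification nominally requires a known success probability; this is handled either by a preliminary $\bOt{\sqrt m}$ amplitude estimate of $Z$ or by invoking a fixed-point variant, in either case preserving the $\bOt{\beta\sqrt m}$ bound.
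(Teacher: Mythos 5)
Your overall plan mirrors the paper's proof: build a block-encoding of $\diag(x^TA/\beta)$ from the tree data structure plus one query per use, estimate $u_j$ by amplitude estimation, find $\tilde u_{\max}$ by quantum maximum-finding, exponentiate the shifted diagonal via Theorem~\ref{thm:polyTransf} with the polynomial $\tilde P$ of Lemma~\ref{lem:polyApx}, and finish with $\bOt{\sqrt m}$ rounds of amplitude amplification. The block-encoding construction by controlled rotation then inverse state preparation is an equivalent variant of the paper's $U=V^\dagger(\mathrm{SWAP}_{12}\otimes I)V$.

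However, there is a genuine flaw in the error bookkeeping, and it has two parts. First, you use a single precision parameter $\xi$ for two different purposes: the precision of $\tilde u_{\max}$ and the polynomial-approximation precision of $\tilde P$. These need to be decoupled. For the maximum, a \emph{constant} additive accuracy $\tilde u_{\max}\in[u_{\max},u_{\max}+1]$ suffices: the shift $\tilde u_{\max}$ appears as a common multiplicative factor $\bm{e}^{-\tilde u_{\max}}$ in every unnormalized weight, so it does not affect the output distribution at all, only the acceptance probability (by a constant factor). If instead you really run amplitude estimation to precision $\xi$ inside D\"urr--H\o yer, the cost is $\bOt{\sqrt m\,\beta/\xi}$, and with your $\xi=\Theta(\delta/\polylog{\cdot})$ this is $\bOt{\beta\sqrt m/\delta}$, not $\bOt{\beta\sqrt m}$ (the paper's $\bOt{\cdot}$ absorbs $\polylog{1/\delta}$ but not $1/\delta$). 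Second, the claimed total-variation bound $\bigO{\xi\log m}$ is wrong: the correct propagation is $\bigO{m\xi}$. Each unnormalized weight $\tilde P(\cdot)^2$ is only guaranteed to be $\bigO{\xi}$-close to $\bm{e}^{u_j-\tilde u_{\max}}/16$, so the $\ell_1$ error of the unnormalized vector is $\bigO{m\xi}$, and dividing by the $\Omega(1)$ partition function gives TV distance $\bigO{m\xi}$, not $\bigO{\xi\log m}$. Consequently $\xi=\Theta(\delta/\polylog{\cdot})$ is \emph{not} small enough to guarantee a $\delta$-close output distribution; one needs $\xi=\Theta(\delta/m)$ as in the paper. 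With both fixes made (constant precision for $\tilde u_{\max}$, $\xi=\Theta(\delta/m)$ for $\tilde P$), the complexity bound $\bOt{\beta\sqrt m}$ is recovered because the polynomial degree, and hence the per-round cost, only depends logarithmically on $1/\xi$.
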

\begin{proof}
	Using the tree data structure we can implement a unitary in time $\bOt{1}$ that acts as
\begin{align*}
	\ket{0}\ket{\bar{0}}
	&\mapsto \ket{0}\sum_{i\in[n]}\sqrt{x_i/\beta}\ket{i} + \ket{1}\ket{\mathrm{garbage}},
\end{align*}
where $\ket{\mathrm{garbage}}$ is some subnormalized quantum state on a (possibly multi-qubit) register.
Similarly, with a single additional query we can implement the unitary $V$ acting as\footnote{In case of a complex matrix entry $A_{ij}$ any consistent choice of $\sqrt{A_{ij}}$ works, e.g., we can set $\sqrt{-1}$ to $\bm{i}$.}
\begin{align*}
	V \colon \ket{0}\ket{0}\ket{\bar{0}}\ket{j}
	&\mapsto \ket{0}\left(\ket{0}\sum_{i\in[n]}\sqrt{x_i A_{ij}/\beta}\ket{i} + \ket{1}\ket{\mathrm{garbage}'}\right)\ket{j},
\end{align*}
where the third and fourth registers have $\bOt{1}$-qubits.
Let $\mathrm{SWAP}_{12}$ be the gate that swaps the first two qubits.
Now observe that for $U:=V^\dagger \left(\mathrm{SWAP}_{12}\otimes I \right)V$ we have
\[
  (\bra{00\bar{0}}\otimes I)U(\ket{00\bar{0}}\otimes I)=\diag(x^T A/\beta),
\]
i.e., $U$ is a block-encoding of the Hermitian matrix $\diag(x^T A/\beta)$.
We can estimate the value of $u_j=\sum_{i\in[n]}x_i A_{ij}$ to additive error say $1/2$ with $\bOt{\beta}$ uses of $U$ and $U^{-1}$ by amplitude estimation~\cite{brassard2002AmpAndEst}.
Therefore, with $u_{\max}:=\max_{j\in[m]}u_{j}$, we can also find a constant additive approximation $\tilde{u}_{\max}\in[u_{\max},u_{\max}+1]$ in time $\bOt{\beta\sqrt{m}}$ with high probability using generalized quantum minimum/maximum-finding~\cite{apeldoorn2017QSDPSolvers}. We can boost the success probability to $1-\frac{\delta}{3}$ by $\bigO{\log(1/\delta)}$ repetitions.
Using simple block-encoding techniques~\cite{gilyen2018QSingValTransf} we can then also implement $U'$, a block-encoding of $M := \diag(x^T A - \tilde{u}_{\max} e)/(2\beta)$, with a constant overhead.

Now we are ready to implement the rejection sampling.
We first prepare the uniform superposition $\frac{1}{\sqrt{m}}\sum_{j\in[m]}\ket{j}$.
 Then ideally we would like to apply the map
\begin{align*}
  \ket{0}\ket{j} &\mapsto
  \sqrt{\bm{e}^{u_j-\tilde{u}_{\max}}}\ket{0}\ket{j}+\ket{1}\ket{\mathrm{garbage}''}\\
                 &= \ket{0}\bm{e}^{\beta \frac{ \diag( x^T A - \tilde{u}_{\max} e)  }{2\beta}}\ket{j}+\ket{1}\ket{\mathrm{garbage}''}.\\
  &= \ket{0} \bm{e}^{\beta M} \ket{j} + \ket{1}\ket{\mathrm{garbage}''}.
\end{align*}
to this uniform superposition.

We implement a good approximation of the above by replacing the function $\bm{e}^{\beta z}/4$ with an approximating polynomial $\tilde{P}(z)$, and then using eigenvalue transformation Theorem~\ref{thm:polyTransf}.
For this we use an approximation polynomial $\tilde{P}(z)$ as in Lemma~\ref{lem:polyApx}, with $\xi=\frac{\delta}{12em}$.
We can then implement the map
$$\ket{0}\ket{\bar{0}}\ket{j}\mapsto \ket{0}\ket{\bar{0}}\tilde{P}\left(\diag(x^T A - \tilde{u}_{\max}e)/ 2\beta\right)\ket{j}/4+\ket{1}\ket{\mathrm{garbage}'''},$$
using $\bOt{\deg(\tilde{P})}=\bOt{\beta\log(1/\delta)}$ steps as shown by Theorem~\ref{thm:polyTransf}.

Finally, we obtain a sample with probability at least $1-\frac{\delta}{9}$ using $\bigO{\sqrt{m}\log(1/\delta)}$ rounds of amplitude amplification.
Note that due to the $\delta/(12em)$ accuracy, we sample from a distribution $\bigO{\delta}$-close to $G(x^T A)$ in total variation distance.
%We need to chose $\delta$ such that the total variation distance of the approximate Gibbs-state form the desired Gibbs state is so small that the accumulative errors during the algorithm remain bounded with high probability.
\end{proof}

\begin{theorem}\label{thm:DenseQuantum}
  Algorithm~\ref{alg:main} can be used to find an $\eps$-optimal pair of solutions with probability at least $1-\delta$ in $\frac{16}{\eps^2} \ln\left(\frac{nm}{\delta}\right)$ iterations. On a quantum computer with QCRAM the $t$-th iteration can be implemented in $\bOt{(1+t\eps)\sqrt{n+m}}$ time and the same number of quantum queries to the entries of $A$, leading to a total time $\&$ query complexity of $\bOt{\sqrt{n+m} / \eps^3}$.
\end{theorem}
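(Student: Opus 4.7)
The plan is to combine Lemma~\ref{lem:correctness} (fixed-accuracy version) for the iteration count with Lemma~\ref{lem:GibbsDense} for the per-iteration Gibbs sampling cost, and then sum the costs over all iterations. The iteration bound $T=\lceil 16\ln(nm/\delta)/\eps^2\rceil$ is immediate from the second part of Lemma~\ref{lem:correctness}. The key structural observation is that with $\eta^{(t)}=\eps/4$, the vectors $x^{(t)}$ and $y^{(t)}$ remain $t$-sparse (since each iteration adds one standard basis vector times $\eps/4$), are stored in the tree data structure described in Figure~\ref{fig:tree}, and satisfy $\nrm{x^{(t)}}_1=\nrm{y^{(t)}}_1=t\eps/4$.

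At iteration $t$ we need to sample $a\sim p^{(t)}=G(-A^T x^{(t)})$ and $b\sim q^{(t)}=G(Ay^{(t)})$. Apply Lemma~\ref{lem:GibbsDense} with $\beta=\max(1,t\eps/4)=\bigO{1+t\eps}$ (absorbing the sign of $A$ into the block-encoding is harmless). This gives samples from distributions $\delta'$-close in total variation to the desired Gibbs distributions in time and query complexity $\bOt{(1+t\eps)\sqrt{m}}$ and $\bOt{(1+t\eps)\sqrt{n}}$ respectively. The update of the tree data structure in QCRAM after sampling costs only $\bOt{1}$. Hence one iteration of Algorithm~\ref{alg:main} runs in time $\bOt{(1+t\eps)\sqrt{n+m}}$ with the same query complexity.

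Summing over the $T=\bigO{\ln(nm/\delta)/\eps^2}$ iterations gives
\[
\sum_{t=1}^{T} \bOt{(1+t\eps)\sqrt{n+m}} \;=\; \bOt{\sqrt{n+m}\left(T + \eps T^2\right)} \;=\; \bOt{\sqrt{n+m}\left(\tfrac{1}{\eps^2} + \tfrac{1}{\eps^3}\right)} \;=\; \bOt{\sqrt{n+m}/\eps^3},
\]
which is the claimed total complexity; the polylogarithmic factors hidden in $\bOt{\cdot}$ absorb the $\log(T)$ from the series and the $\log(1/\delta')$ from each Gibbs call.

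The one subtlety, which I expect to be the main (but still routine) obstacle, is propagating the sampling errors through Lemma~\ref{lem:correctness}. The correctness proof used exact samples from $p^{(t)},q^{(t)}$, so I would set $\delta':=\delta/(3T)$ per Gibbs call and take a union bound over all $2T$ calls so that with probability at least $1-\delta/3$ every sample is drawn from a distribution within $\delta/(3T)$ total-variation distance of the ideal one; by a standard coupling argument the joint distribution of all executed samples is then within $\delta/3$ of the ideal one, so we can replace ``ideal random'' by ``simulated random'' while losing only an additional $\delta/3$ in the final success probability. Adjusting $\delta$ by constant factors preserves the stated bound, and only contributes $\polylog{T/\delta}$ factors that are already absorbed in $\bOt{\cdot}$.
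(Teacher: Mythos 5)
Your proposal is correct and follows essentially the same route as the paper's (terse) proof: invoke the fixed-accuracy part of Lemma~\ref{lem:correctness} for the iteration bound $T=\lceil 16\ln(nm/\delta)/\eps^2\rceil$, apply Lemma~\ref{lem:GibbsDense} with $\beta=\max(1,t\eps/4)=\bigO{1+t\eps}$ for the per-iteration cost, sum $\sum_{t\le T}(1+t\eps)=\bigO{T+\eps T^2}=\bigO{1/\eps^3}$, and take the per-call Gibbs error to be $\bigO{\delta/T}=\bigO{\eps^2\delta/\ln(nm/\delta)}$, which is exactly the paper's choice. The only small cosmetic remark is that the "$\delta$" in Lemma~\ref{lem:GibbsDense} is stated as a total-variation bound (with failure events already folded in), so the cleanest phrasing is a single coupling/hybrid argument bounding the TV distance of the joint transcript by $2T\cdot\delta'$, rather than first a union bound and then a coupling — but that does not affect the validity of the argument.
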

\begin{proof}
 This follows from Lemma~\ref{lem:correctness} and Lemma~\ref{lem:GibbsDense}, by setting the additive error to $\bigO{\eps^2\delta/\ln(nm/\delta)}$ in the latter.
\end{proof}

\subsection{Sparse matrices}\label{subsec:qsparse}
Now we describe our result for the case when $A$ has sparse rows and columns.
We state the result with $s$ denoting an upper bound on the sparsity of both the rows and the columns.
Note that if separate bounds are known for each, then $s$ is simply the maximum of the two bounds.
%We use a quantum counterpart of the 

\begin{lemma}[Gibbs sampling from a linear combination of sparse rows]\label{lem:GibbsSparse}
	Suppose $x\in \R_{\geq 0}^n$ has at most $t$ non-zero entries and is stored in a data structure as in Figure~\ref{fig:tree}, and $\nrm{x}_1\leq \beta$ for $\beta\geq 1$. If we have quantum query access to a sparse oracle of $A\in [-1,1]^{n\times m}$ having $s$-sparse rows and columns, then we can sample from a distribution that is $\delta$-close to $G(x^T A)$ in total variation distance, in query and time complexity $\bOt{\beta^\frac{3}{2}\sqrt{s}}$ on a quantum computer with QCRAM.
\end{lemma}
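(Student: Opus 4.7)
The plan is to follow the blueprint of Lemma~\ref{lem:GibbsDense}---block-encode $M:=\mathrm{diag}(x^{T}A)/(2\beta)$, find a constant-additive approximation $\tilde u_{\max}$ of $u_{\max}=\max_{j}(x^{T}A)_{j}$, and rejection-sample from $G(x^TA)$ by applying a polynomial approximating $z\mapsto \bm{e}^{\beta z}/4$ to a proposal state and amplitude-amplifying the success flag---but to replace the uniform proposal on $[m]$ by one concentrated on the sparse witness set $J:=\mathrm{supp}(x^TA)$. This swap is what will change the $\sqrt{m}$ factor of the dense algorithm into a $\sqrt{s}$-type factor, at the price of a slightly more delicate analysis.

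First, I would construct the block-encoding of $M$ essentially as in Lemma~\ref{lem:GibbsDense}; the only change is that each entry query $A_{ij}$ is served by the sparse oracle via a binary search through the $s$-sparse row $i$, costing $\bOt{1}$ time. Second, using the tree data structure for $x$ and the sparse row oracle, I would prepare a proposal state by producing $\sum_{i}\sqrt{x_{i}/\beta}\,\ket{i}$, appending a uniform register $\tfrac{1}{\sqrt{s}}\sum_{\ell=1}^{s}\ket{\ell}$, and querying the sparse oracle to write the $\ell$-th nonzero column $c(i,\ell)$ of row $i$ into a fresh register, obtaining
\[
   \ket{\psi_{q}}=\sum_{i,\ell}\sqrt{x_{i}/(\beta s)}\,\ket{i}\ket{\ell}\ket{c(i,\ell)}.
\]
The $j$-marginal of $\ket{\psi_q}$ is $q(j)=\tfrac{1}{\beta s}\sum_{i:A_{ij}\neq 0}x_{i}$, supported exactly on $J$ and summing to~$1$. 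Since zero columns $j\notin J$ contribute only the trivial weight $\bm{e}^0=1$ to the Gibbs measure, I would coherently mix $\ket{\psi_q}$ with a uniform branch over $[m]$ in an appropriate amplitude ratio to recover the full target.

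Third, I would find $\tilde u_{\max}\in[u_{\max},u_{\max}+1]$ with high probability by quantum maximum-finding combined with amplitude estimation on the block-encoding of $M$, run over $\ket{\psi_q}$ rather than a uniform state on $[m]$; because $u_j=0$ outside $J$, this is equivalent to searching over a proposal of $L_1$-volume $\Theta(\beta s)$, costing $\bOt{\sqrt{\beta s}}$ evaluations at $\bOt{\beta}$ queries each, i.e.\ $\bOt{\beta^{3/2}\sqrt{s}}$ overall. Finally, I would apply Theorem~\ref{thm:polyTransf} with the polynomial $\tilde P$ from Lemma~\ref{lem:polyApx} (with $\xi=\bigO{\delta/(\beta s)}$) to implement an approximation of the map $\ket{0}\ket{\bar{0}}\ket{j}\mapsto \ket{0}\tfrac14\bm{e}^{(u_j-\tilde u_{\max})/2}\ket{\bar{0}}\ket{j}+\ket{1}\ket{\mathrm{garbage}}$ at cost $\bOt{\beta}$ per application, apply it to the mixed proposal, and amplitude-amplify the success flag in $\bOt{\sqrt{\beta s}}$ rounds to produce a sample $\delta$-close in total variation to $G(x^{T}A)$.

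The main obstacle is the two-branch mixed proposal: a support-restricted proposal alone misses the $(m-|J|)$ trivial mass, while a uniform proposal alone re-introduces the $\sqrt{m}$ cost. Coherently mixing $\ket{\psi_q}$ with the uniform branch in the correct amplitude ratio, and verifying that the effective proposal volume is $\Theta(\beta s)$---independent of the iteration count $t$ and the ambient dimension $m$---are the delicate analytical steps that underlie the final $\bOt{\beta^{3/2}\sqrt{s}}$ bound.
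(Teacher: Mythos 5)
Your high-level instinct---replace the uniform proposal over $[m]$ by a sparse proposal built from the nonzero columns hit by $x$---is the right first step, and your constructions of $\ket{\psi_q}$ and $\tilde u_{\max}$ are essentially identical to the paper's. But the core of your plan, ``coherently mix $\ket{\psi_q}$ with a uniform branch over $[m]$ in an appropriate amplitude ratio and rejection-sample,'' does not yield $\bOt{\sqrt{\beta s}}$ rounds, and you have not addressed the obstruction.

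The problem: in plain rejection sampling with a single proposal $p(j)$ and a single acceptance rule, the round count is governed by $M=\max_j \bm{e}^{u_j-\tilde u_{\max}}/p(j)$, and the success probability is $\frac{1}{M}\sum_j\bm{e}^{u_j-\tilde u_{\max}}$, which can be as small as $\Theta(1/M)$ when the Gibbs distribution is concentrated. With any convex mixture of uniform and your $q(j)$, the worst ratio is achieved on indices $j$ with small $w_j$ but $\bm{e}^{u_j}\approx 1$ (e.g.\ $j\notin J$ when $u_{\max}$ is small), for which $p(j)=\Theta(1/m)$ and hence $M=\Theta(m)$. You then get $\Omega(1/m)$ success probability and $\bOt{\sqrt{m}}$ amplification rounds---back to the dense bound. ``Choosing the amplitude ratio'' cannot fix this: to turn a proposal with $j$-dependent density $p(j)$ into the right target you would need to divide out $p(j)$, which you cannot do coherently because $q(j)=\frac{1}{\beta s}\sum_{i:A_{ij}\neq 0}x_i$ is not accessible as a computable quantity per $j$.

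The paper resolves this by \emph{splitting the target mass}, not the proposal, using the observable proxy $w_j:=\sum_i x_i|A_{ij}|\geq|u_j|$: (i) build a block-encoding of $\diag(w/\beta)$ and apply the threshold polynomial $\tilde Q$ of Lemma~\ref{lem:polyApx}, which coherently flags whether $w_j\lesssim 1/2$ or $w_j\gtrsim 1$; (ii) handle the small-$w_j$ part, whose target is $(1-Q^2(w_j))\bm{e}^{u_j}$, by a uniform proposal with the \emph{fixed} scaling $\bm{e}^{-1}$ (valid because $|u_j|\leq w_j\leq 1$ and at least half of all $j$ satisfy $w_j\leq 1/2$ since $\sum_j w_j\leq\beta s\leq m/2$); (iii) handle the large-$w_j$ part, whose target is $Q^2(w_j)\bm{e}^{u_j}$, by the $\ket{\psi}$ proposal weighted $\propto w_j/(\beta s)$, with an additional $\tilde R(w_j)\approx 1/\sqrt{4w_j}$ block-encoded correction that cancels the $w_j$-weighting so the acceptance amplitude is $\propto\bm{e}^{(u_j-\tilde u_{\max})/2}/\sqrt{w_j}$; (iv) probabilistically mix the two procedures so their $1/N$-normalized contributions add up to exactly $\bm{e}^{u_j}/N$. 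Each procedure then has success probability $\Omega(1)$ and $\Omega(1/(\beta s))$ respectively, and the mixed scheme achieves $\Omega(1/(\beta s))$ overall, justifying the $\bOt{\sqrt{\beta s}}$ amplification rounds. Your proposal is missing the $w_j$ proxy, the threshold split, and the $1/\sqrt{w_j}$ correction---the three ideas that make the $\sqrt{\beta s}$ rounds bound go through.
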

\begin{proof}
	We can assume without loss of generality that $\beta s\leq m/4$, otherwise the statement follows from Lemma~\ref{lem:GibbsDense}.
Let us define $w\in\R^m$ by $w_j:=\sum_{i\in[n]}x_i |A_{ij}|\geq u_j$.
The main idea is the following: we can implement Gibbs sampling efficiently for $j$'s where $w_j\leq 1$, this is because $w_j\leq 1$ for at least half of the $j$'s (as $\sum_{j\in[m]}w_j\leq \beta s \leq m/2$), this implies that for these $j$ we have $|u_j|\leq 1$ and hence the distribution $G(u)$ is quite uniform on these positions.
On the other hand Gibbs sampling $j$'s for which $w_j\geq 1/2$ is also easy, this is because we can find such $j$'s efficiently by first sampling an $i$ with probability $x_i/\beta$ and then sampling a $j$ with probability $\frac{|A_{ij}|}{s}$.
Since the two regimes cover every $j \in [m]$ we can Gibbs sample a $j$ by combining the two sampling procedures, in a similar fashion to~\cite{brandao2017QSDPSpeedupsLearning,apeldoorn2018ImprovedQSDPSolving}.
Hence the proof is structured as follows: first we show how to approximate $u_{\max}$ to constant additive error, then we show how to distinguish the two regimes.
Following this we show how to handle the small $w_j$'s and then the large $w_j$'s.
We conclude by showing how to apply these procedures together.
	
	To approximate $u_{\max}$ recall that using the tree data structure, as in Lemma~\ref{lem:GibbsDense}, we can implement a block-encoding $U$ of $\diag(x^T A/\beta)$ with a single query to $A$ in time $\bOt{1}$.
%such that $(\bra{\bar{0}}\otimes I)U(\ket{\bar{0}}\otimes I)=\diag(x^T A/\beta)$.
    We can estimate the value of $u_j=\sum_{i\in[n]}x_i A_{ij}$ to additive error $1/2$ with $\bOt{\beta}$ uses of $U$.
	% Also using a slightly extended data structure storing $x$ we can prepare a uniform superposition of the non-zero entry indices.
    % So we can prepare the superposition $\sum_{i\text{ s.t. }x_i\neq 0}\sum_{j\text{ s.t. }A_{ij}\neq 0}\frac{\ket{i}\ket{j}}{\sqrt{ts}}$ with a single query in time $\bigO{1}$.
	As in Lemma~\ref{lem:GibbsDense} we use the tree data structure to implement a unitary acting as
	\begin{align*}
	\ket{0}\ket{\bar{0}}
	&\mapsto \ket{0}\sum_{i\in[n]}\sqrt{x_i/\beta}\ket{i} + \ket{1}\ket{\mathrm{garbage}}.
	\end{align*}
    in time $\bOt{1}$.
	With a single additional query, we can implement a unitary acting as
	\begin{align*}
	\ket{0}\ket{\bar{0}}\ket{\bar{0}}
	&\mapsto \ket{0}\sum_{i\in[n]}\sum_{j\colon A_{ij}\neq 0}\sqrt{x_i |A_{ij}|/(\beta s)}\ket{i}\ket{j} + \ket{1}\ket{\mathrm{garbage}'},
	\end{align*}
    We will call the resulting state $\ket{\psi}$.
    Now, we only need to approximate the maximum $u_{\max}$ if it is larger than $1$, otherwise we just set $\tilde{u}_{\max} = 1$, hence we start by assuming that it is greater than~$1$.
	Observe that if $u_{j}> 1$, then $\left(\ketbra{0}{0}\otimes I\otimes\ketbra{j}{j}\right)\ket{\psi}$ has squared norm $w_j/(\beta s)\geq u_j/(\beta s)> 1/(\beta s)$.
This means that the probability of getting outcome $\ket{0}$ in the first register and $\ket{j}$ in the last register when measuring $\ket{\psi}$ is at least $1/(\beta s)$.
	We can find a constant additive approximation $\tilde{u}_{\max}\in[u_{\max},u_{\max}+1]$ of the maximum value by using generalized quantum minimum/maximum-finding~\cite{apeldoorn2017QSDPSolvers} with  $\ket{\psi}$ as the initial superposition over $j$'s and by estimating $u_j$ using $U$.
Thus in total it takes time $\bOt{\beta^{\frac{3}{2}} \sqrt{s}}$ to find a constant approximation of the maximum with high probability, and as in Lemma~\ref{lem:GibbsDense} we can boost the success probability to $1-\delta/3$ with $\bigO{\log(1/\delta)}$ repetitions.
If all $u_j$ are less than $1$, then the algorithm will return an estimate $\tilde{u}_{\max}$ that is less than $1$ and hence we can detect this.
	
	Now to distinguish the two cases for $w_j$, consider the unitary $W$ that acts as
	\begin{align*}
		W\colon \ket{0}\ket{0}\ket{\bar{0}}\ket{j}
		&\mapsto \ket{0}\left(\ket{0}\sum_{i\in[n]}\sqrt{x_i|A_{ij}|/\beta}\ket{i} + \ket{1}\ket{\mathrm{garbage''}}\right)\ket{j},
	\end{align*}
	which can be implemented using two queries and $\bOt{1}$ gates.
Observe that the unitary $V:=W^\dagger(\mathrm{SWAP}_{12}\otimes I\otimes I)W$ is a block-encoding of the Hermitian matrix $\diag(w/\beta)$.
Let $\xi$ be a precision parameter that will be set later. Using Lemma~\ref{lem:polyApx} and Theorem~\ref{thm:polyTransf} we can implement a unitary $V'$ that is a block-encoding of $\diag(\tilde{ Q}(w))$ such that 
	\begin{align*}
	\tilde{ Q}(w_j)^2\in \begin{cases}
	[0,\xi/2], & \text{if } w_j\leq \frac{1}{2} \\
	[1-\xi/2,1] & \text{if } w_j\geq 1,
	\end{cases}
	\end{align*}
	using\footnote{\label{foot:QSVT}A similar unitary can be implemented with complexity $\bigO{\sqrt{\beta}\log(1/\xi)}$ by quantum singular value transformation~\cite{gilyen2018QSingValTransf} of $W$, however this operation is not the bottleneck so we do not  optimize it heavily.} $\bOt{\beta\log(1/\xi)}$ time and queries.
Now there is a $Q\colon [0,\beta]\to [0,1]$ such that $\tilde{Q}^2$ is an additive $\frac{\xi}{2}$-approximation of the function $\tilde{Q}^2$, $Q$ is zero on $[0,1/2]$, and $Q$ is $1$ on $[1,\beta]$. $Q$ should be thought of as the ``idealized'' version of $\tilde{Q}$.

To Gibbs sample for the small $w_j$, consider the following procedure.
First prepare a maximally entangled state $\sum_{j\in[m]}\frac{\ket{j}\ket{j}}{\sqrt{m}}$ and apply $V'$ to the first register (and to some ancilla qubits) and set a flag to indicate success if the ancilla state \emph{does not remain} as $\ket{\bar{0}}$, so that the amplitude of $j$ decreases by a factor $\sqrt{1-\tilde{Q}^2(w_j)}$.
The flag signifies the part of the maximally entangled state that corresponds to the small $w_j$.
Then, using some fresh ancilla qubits $\ket{\bar{0}}$, apply the map $\bm{e}^{\frac{u_j-1}{2}}/4$ to the state $\ket{j}$ in the second register indicating ``success'' with a second flag qubit, apply this so that the additive error guarantee is $\frac{\xi}{6}$ if $u_j\leq 1$.
This can be done in $\bOt{\beta\log(1/\xi)}$ time and queries similarly to Lemma~\ref{lem:GibbsDense}.
The probability of measuring the success flag and seeing $\ket{j}$ in the second register is $\xi$-close to $(1-Q^2(w_j))\bm{e}^{u_j}/(16e m)$. Summing this over all $j$  amounts to an overall success probability of $\Omega(1)$ for this ``partial'' rejection sampling procedure, because $|u_j|\leq w_j\leq 1/2$ for at least half of the indices $j\in [m]$.

	Now to Gibbs sample for the large $w_j$'s we consider another procedure: prepare the state $\ket{\psi}$ and consider the first qubit being in the $\ket{0}$ state as ``success''.
    As before we apply $V'$ to the first register (and some ancilla qubits) but now we set a flag to indicate success if the ancilla state \emph{remains} $\ket{\bar{0}}$, so that the amplitude of $j$ decreases by a factor $\tilde{Q}(w_j)$. Now we would like to apply $\bm{e}^{(u_j-\tilde{u}_{\max})/2} / 8\sqrt{w_j} $, we do so in two steps.
	First we apply the map $\bm{e}^{(u_j-\tilde{u}_{\max})/2}/4$  to the state $\ket{j}$ in the last register with a additive error that is $\bigO{\xi}$ using $\bOt{\beta\log(1/\xi)}$ time and queries. We indicate success with a second flag qubit, similarly to Lemma~\ref{lem:GibbsDense}.
Then we apply a block-encoding $V''$ of $\diag(\tilde{R}(w))$ to the last register, where $|\tilde{R}(w_j)-1/\sqrt{4w_j}|=\bigO{\xi}$ for all $w_j\geq 1/2$, with the help of Lemma~\ref{lem:polyApx} and Theorem~\ref{thm:polyTransf}. This will take\textsuperscript{\ref{foot:QSVT}} $\bOt{\beta\log(1/\xi)}$ time and queries.
 We set a fourth success flag qubit if we retain the $\ket{\bar{0}}$ ancilla state.
The probability that we see all the flags indicating success and that we get measurement outcome $\ket{j}$ in the last register upon measuring the final state, is $\xi$-close to $ Q^2(w_j)\bm{e}^{u_j-\tilde{u}_{\max}}/(64 \beta s)$, which amounts to an overall success probability of $\Omega(\frac{1}{\beta s})$ for this ``partial'' rejection sampling procedure.
	
	Finally, notice that the two resulting partial Gibbs states are subnormalized in different ways. Let us define $N := 16em + 64 \beta s\bm{e}^{\tilde{u}_{\max}}$.
We sample a $j$ from the full distribution in the following way: with probability $\frac{16em}{N}$ sample $j$ using the first procedure, and with probability $\frac{64 \beta s\bm{e}^{\tilde{u}_{\max}}}{N}$ sample $j$ using the second procedure.
Since both procedures have total success probability $\Omega(\frac{1}{\beta s})$ the success probability of the final algorithm is also $\Omega(\frac{1}{\beta s})$.
Therefore we can rejection sample a $j$ with $\bigO{\sqrt{\beta s}}$ rounds of amplitude amplification with high probability, and can boost the success probability to $1-\delta/9$ by $\bigO{\log(1/\delta)}$ repetitions.
If we set $\xi=\bigO{\frac{\delta}{m\beta s}}=\bigO{\frac{\delta}{m^2}}$, then the distribution will be $\delta$ close to the true Gibbs distribution.
The complexity of each rejection sampling round is $\bOt{\beta\log(m/\xi)}$, which leads to the final complexity statement.
\end{proof}

\begin{theorem}\label{thm:SparseQuantum}
	Algorithm~\ref{alg:main} can be used to find an $\eps$-optimal pair of solutions with probability at least $1-\delta$ in $\frac{16}{\eps^2} \ln\left(\frac{nm}{\delta}\right)$ iterations. On a quantum computer with QCRAM the $t$-th iteration can be implemented in $\bOt{(1+t\eps)^\frac{3}{2}\sqrt{s}}$ time and the same number of quantum queries to a sparse oracle of $A$ with $s$-sparse rows and columns, giving  a total time $\&$ query  complexity $\bOt{\sqrt{s} / \eps^{3.5}}$.
\end{theorem}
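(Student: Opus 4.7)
The plan is to follow the same template as Theorem~\ref{thm:DenseQuantum}, simply substituting the sparse Gibbs sampler of Lemma~\ref{lem:GibbsSparse} in place of the dense one. First, I would appeal to Lemma~\ref{lem:correctness} in its fixed-accuracy form: running Algorithm~\ref{alg:main} with $\eta^{(t)}=\eps/4$ for $T=\lceil 16\ln(nm/\delta)/\eps^2\rceil$ iterations produces an $\eps$-optimal pair with probability at least $1-\delta$, provided that in each iteration we genuinely sample from $p^{(t)}=G(-A^T x^{(t)})$ and $q^{(t)}=G(A y^{(t)})$.

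Next I would bound the per-iteration cost. After iteration $t$ we have stored in the tree data structure vectors $x^{(t)}$ and $y^{(t)}$, each $t$-sparse, with $\nrm{x^{(t)}}_1=\nrm{y^{(t)}}_1 = t\eta^{(t)} = t\eps/4$. Setting $\beta := \max(1, t\eps/4) = \Theta(1 + t\eps)$, Lemma~\ref{lem:GibbsSparse} lets us sample from a distribution $\delta'$-close in total variation to $G(A y^{(t)})$ (and analogously for $-A^T x^{(t)}$) in time and queries $\bOt{\beta^{3/2}\sqrt{s}} = \bOt{(1+t\eps)^{3/2}\sqrt{s}}$. Updating the tree data structures costs only $\bOt{1}$ extra per iteration. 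Choosing the sampling accuracy $\delta' = \bigO{\delta/T} = \bigO{\eps^2\delta/\ln(nm/\delta)}$ and taking a union bound over the $T$ iterations, the samples produced are jointly indistinguishable from the ideal ones up to total error $\bigO{\delta}$, so that Lemma~\ref{lem:correctness} still applies after adjusting constants.

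Finally I would sum the per-iteration costs. The total complexity is
\[
\sum_{t=1}^{T} \bOt{(1+t\eps)^{3/2}\sqrt{s}} = \bOt{\sqrt{s}\int_0^T (1+t\eps)^{3/2} dt} = \bOt{\sqrt{s}\cdot \tfrac{1}{\eps}(1+T\eps)^{5/2}}.
\]
Since $T\eps = \Theta(\ln(nm/\delta)/\eps)$, we have $(1+T\eps)^{5/2} = \bOt{1/\eps^{5/2}}$, yielding total complexity $\bOt{\sqrt{s}/\eps^{7/2}} = \bOt{\sqrt{s}/\eps^{3.5}}$, as claimed.

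The only mildly subtle point is the accounting of the sampling error, since the analysis of Lemma~\ref{lem:correctness} assumed \emph{exact} sampling from $p^{(t)},q^{(t)}$. I would handle this by a standard coupling argument: a distribution that is $\delta'$-close in total variation to $p^{(t)}$ can be coupled to an ideal sample so that they differ with probability at most $\delta'$, so setting $\delta' = \delta/(2T)$ and taking a union bound ensures that with probability $1-\delta/2$ all $2T$ samples over the run coincide with ideal ones, which combined with the $1-\delta/2$ correctness guarantee of Lemma~\ref{lem:correctness} gives the claimed $1-\delta$ success probability. Everything else is a direct plug-in of Lemma~\ref{lem:GibbsSparse}.
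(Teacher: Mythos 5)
Your proposal is correct and follows the same route as the paper: invoke the fixed-accuracy version of Lemma~\ref{lem:correctness} for the iteration count, plug in the sparse Gibbs sampler of Lemma~\ref{lem:GibbsSparse} with $\beta = \Theta(1+t\eps)$ and per-sample accuracy $\bigO{\eps^2\delta/\ln(nm/\delta)}$, then sum $(1+t\eps)^{3/2}$ over $t\le T$ to get $\bOt{\sqrt{s}/\eps^{3.5}}$. In fact your coupling argument for the inexact-sampling error is spelled out more carefully than the paper's terse one-line proof (which also contains a typo, citing Lemma~\ref{lem:GibbsDense} where Lemma~\ref{lem:GibbsSparse} is clearly meant).
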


\begin{proof}
	This follows from Lemma~\ref{lem:correctness} and Lemma~\ref{lem:GibbsDense}, by setting the additive error to $\bigO{\eps^2\delta/\ln(nm/\delta)}$ in the latter.
\end{proof}

\section{Reduction of general LP-solving to zero-sum games}\label{sec:reduction}
In this section we will reduce general LPs to zero-sum games to obtain a faster quantum algorithm for LP-solving. A similar argument can be found in for example~\cite{carmon2019Rank1Sketch}. We consider an LP in the following standard form:
\begin{align}
\max_{y\in \R^m} \ \ \ &b^Ty\nonumber\\
s.t.
\ \ \ & Ay \leq  c\label{eq:generalLPprimal}\\
& \phantom{A}y\geq 0, \nonumber
\end{align}
\vskip-3mm
\noindent which gives rise to the dual LP
\vskip-7mm
\begin{align}
\min_{x\in \R^n} \ \ \ &c^Tx\nonumber\\
s.t.
\ \ \ & A^Tx \geq  b\label{eq:generalLPdual}\\
& \phantom{A^T}x\geq 0. \nonumber
\end{align}
We assume without loss of generality that all entries of $A$, $c$ and $b$ are in $[-1,1]$. Furthermore, we assume a constant $R$ is known such that adding the constraint $\sum_{i=1}^m y_i \leq R$ to the primal will not change the optimal value of the primal. Similarly, we assume a constant $r$ is known such that adding the constraint $\sum_{j=1}^n x_j \leq r$ to the dual will not change the optimal value of the dual.  This implies that strong duality holds and the two optimal values coincide, so we will write $\opt$ for this value. Moreover, we can assume without loss of generality that $|c_i|\leq R$ for all $i\in [n]$, otherwise we could remove the constraint without changing the value of $\opt$. Finally, let $s$ denote a bound on the row and column sparsity of $A$, as well as on the sparsity of $b$ and $c$.
Our reduction will consist of the following steps:
\begin{enumerate}
\item Reduce an optimization problem to feasibility using binary search.
\item Scale the problem such that the solutions will be in the simplex.
\item Reduce to a problem where all right-hand sides of the inequality constrains are $0$.
\item Reduce to a zero-sum game.
\end{enumerate}

\paragraph{Reduction to feasibility.} Note that $-R \leq \opt \leq R$, because $\nrm{b}_\infty\leq1$ and there is an optimal solution with $\nrm{y}_1\leq R$.
Hence, if we can answer questions of the type ``is $\opt>\alpha$ or $\opt \leq \alpha+\eps$'' then we can use $\log(R/\eps)$ iterations of binary search to determine $\opt$ up to additive error $\eps$. To answer questions of this form we add $c^Ty \geq \alpha$ as a constraint and ask whether there is a feasible $y$. That is, we want to know whether there is no $y$ satisfying
\begin{align*}
  -b^Ty &\leq -\alpha\\
  Ay &\leq c\\
  \sum_i y_i &\leq R\\
  y &\geq 0
\end{align*}
or,~\footnote{These cases are potentially overlapping. In the intersection we are happy with either conclusion.} for some fixed $\eps'$, there is a $y$ such that
\begin{align*}
  -b^Ty &\leq -\alpha + \eps'\\
  Ay &\leq c+ e \eps'\\
  \sum_i y_i &\leq R\\
  y &\geq 0.
\end{align*}
Here we need to be careful with our choice of $\eps'$ since relaxing all the constraints by $\eps'$ might change the value of $b^Ty$ by as much as $\eps'(r+1)$, as we will show in the proof of Lemma~\ref{lem:reduction}. Hence we pick $\eps' = \Theta(\eps/ (r+1))$.
\paragraph{Scale to the simplex.} Note that by dividing all the right-hand sides by $R$ we simply scale down $y$ such that $\sum_i y_i \leq 1$. This, however, does imply that we want a lower additive error: $\eps'' =  \Theta(\eps/ (R(r+1)))$. Let us define $y'' := (y,z)^T$ where $z\in \R$ is a new variable. Then we obtain the new feasibility problem
\begin{align*}
  \exists ? y'' \in \Delta^{m+1}  \ \ s.t. & \\
                                A''y'' &\leq c''
\end{align*}
where
\[
A''= \left(\begin{array}{rr}
-b^T & 0\\
A & 0
\end{array}\right), \ \  c''= \left(\begin{array}{rr}
-\alpha / R \\
c / R
\end{array}\right)
\]

\paragraph{Right-hand sides zero.} For the final reduction to a zero-sum game, we want that all right-hand sides are equal to zero. To achieve this we use that $\sum_i y_i'' = 1$. We introduce an extra variable $h$ and add the constraint $\sum_i y_i'' = h$. If we now constrain the new vector $y''' = (y'',h)^T$ into the simplex instead of $y''$ then we find that $\sum_i y_i'' = h = 1/2$. Hence by scaling all the right-hand sides by a factor $1/2$ we will not have changed whether the LP is feasible. But now we have a variable $h$ that is fixed to a constant, so we can shift the inequalities by setting the right-hand side to zero while appropriately changing the coefficient of $h$ on the left-hand side. In particular we get the new feasibility problem
\begin{align*}
  \exists ? y''' \in \Delta^{m+2}  \ \ s.t. & \\
                                A'''y''' &\leq 0,
\end{align*}
where
\[
A'''= \left(\begin{array}{rr}
              e^T & -1\\
              \!-e^T & 1\\
A'' & -c''\kern-1.7mm
\end{array}\right),  \ \  y'''= \left(\begin{array}{ll}
y''\!\! \\
h
\end{array}\right).
\]
It suffices to bring down the additive error by a factor of at most two.
\paragraph{A zero-sum game.} To construct a zero-sum game as in~\eqref{eq:primal} we simply observe that
\begin{align*}
  \min_{y'''\in \Delta^{m+2}, \lambda \in \R} & \lambda \\
  \mathrm{s.t.} \ \ \  & A'''y''' \leq \lambda e
\end{align*}
has a value less than $\eps'''$ iff there is a point that is $\eps'''$-feasible for the last LP.

The final game matrix now is
\vskip-4mm
\begin{equation}\label{eq:generalgamematrix}
  A''' = \left(\begin{array}{rrr}
                 e^T & 1 & -1\\
                 -e^T & 1 & 1\\
                 -b^T & 0 & \alpha/R\\
                 A & 0 & -c/R
               \end{array}\right)
\end{equation}
Now let us prove that the sketch above indeed gives the desired result.

\begin{lemma}\label{lem:reduction}
  Finding the optimal value $\lambda^*$ of the game~\eqref{eq:generalgamematrix} up to additive error $\eps'''=\eps / (6R(r+1))$ suffices to correctly conclude either $\opt < \alpha$ or $\opt\geq \alpha-\eps$ for the corresponding LP~\eqref{eq:generalLPprimal}.
\end{lemma}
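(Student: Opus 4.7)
My plan is to pick a decision threshold for the returned estimate $\tilde\lambda$ (which satisfies $|\tilde\lambda-\lambda^*|\leq \eps'''$) and handle the two sides separately. Concretely: output ``$\opt\geq\alpha-\eps$'' when $\tilde\lambda\leq \eps'''$, and ``$\opt<\alpha$'' otherwise.

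The easy side is $\tilde\lambda>\eps'''$, where $\lambda^*\geq \tilde\lambda-\eps'''>0$ and the zero-sum game is therefore strictly infeasible --- no $y'''\in\Delta^{m+2}$ makes $A'''y'''\leq 0$. Reversing the four reductions listed at the start of Section~\ref{sec:reduction} turns this into the assertion that no $y\geq 0$ simultaneously satisfies $\sum_i y_i\leq R$, $Ay\leq c$, and $b^Ty\geq \alpha$; since by hypothesis $\sum_i y_i\leq R$ is a redundant cap for~\eqref{eq:generalLPprimal}, this is precisely $\opt<\alpha$.

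The interesting side is $\tilde\lambda\leq\eps'''$, where $\lambda^*\leq 2\eps'''$, and I can select a game point $(y,z,h)\in\Delta^{m+2}$ with $A'''y'''\leq 2\eps''' e$. The plan is to convert it into an ``almost feasible'' LP point $\tilde y:=2Ry$. From rows~1--2 of~\eqref{eq:generalgamematrix} together with the simplex identity $\sum_i y_i+z+h=1$ one gets $|2h-1|\leq 2\eps'''$, i.e.\ $h\in[\tfrac12-\eps''',\tfrac12+\eps''']$; plugging this into rows~3 and 4--$(n{+}3)$ of the game inequality and using $|\alpha|,\nrm{c}_\infty\leq R$ then yields
\[
A\tilde y\leq c+6R\eps'''e,\qquad b^T\tilde y \geq \alpha - 6R\eps''',\qquad \tilde y\geq 0.
\]

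Finally I would invoke weak LP duality against an optimal dual $x^*\geq 0$ of~\eqref{eq:generalLPdual}, which exists with $A^Tx^*\geq b$, $c^Tx^*=\opt$, and $\sum_j x^*_j\leq r$ by hypothesis:
\[
\alpha-6R\eps''' \;\leq\; b^T\tilde y \;\leq\; (x^*)^T A\tilde y \;\leq\; \opt + 6R\eps''' \sum\nolimits_j x^*_j \;\leq\; \opt + 6R\eps''' r,
\]
which rearranges to $\opt\geq \alpha - 6R\eps'''(r+1) = \alpha-\eps$ for the chosen $\eps'''=\eps/(6R(r+1))$. The delicate point throughout is that the two $6R\eps'''$ contributions (primal near-feasibility of $\tilde y$ plus the dual slack when pairing with $x^*$) combine additively with the factor $r$ to fit exactly into the denominator $6R(r+1)$, so essentially no slack is available in the constants; every use of $|\alpha|\leq R$, $\nrm{c}_\infty\leq R$, and $h\in[\tfrac12-\eps''',\tfrac12+\eps''']$ has to be just tight enough not to leak constants.
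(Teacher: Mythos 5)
Your proof is correct and follows essentially the same route as the paper's: split on the returned estimate, derive a contradiction when $\lambda^*>0$, and in the other case build $\tilde y = 2Ry$, bound its near-feasibility using $|h-\tfrac12|\le\eps'''$ and $|\alpha|,\nrm{c}_\infty\le R$, and close with weak duality against a dual-optimal $x^*$ with $\sum_j x^*_j\le r$. The only difference is that for the $\lambda^*>0$ side you invoke ``reversing the reductions'' where the paper instead exhibits the explicit game point $\hat y=\bigl(y/(2R),\;1/2-\textstyle\sum_i y_i/(2R),\;1/2\bigr)$ and checks $A'''\hat y\le 0$; that concrete construction is what makes the contradiction rigorous rather than a restatement.
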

\begin{proof}
Finding the optimal value $\lambda^*$ of the game~\eqref{eq:generalgamematrix} up to additive error $\eps'''$ will tell us at least one of the following two things:
\begin{itemize}[itemsep=0mm]
\item $\lambda^* > 0$
\item $\lambda^* \leq 2 \eps'''$.
\end{itemize}

First assume we are in the case where $\lambda^* > 0$. In this case there is no $y''' \in \Delta^{m+2}$ such that
\[
A'''y'''\leq 0.
\]
On the other hand if we would have $\opt \geq \alpha$, then there would be a $y$ such that $Ay \leq c$, $\sum_i y_i\leq R, y \geq 0$ and $b^Ty\geq \alpha$. Let $\hat{y} = (y/(2R), 1/2-\sum_i y_i / (2R), 1/2)^T$. Then
\[
A'''\hat{y} = \left(\begin{array}{c} 0 \\ 0 \\ (-b^Ty+\alpha) / (2R) \\ (Ay - c)/ (2R) \end{array}\right) \leq 0
\]
which is a contradiction, hence $\opt < \alpha$.

Now we treat the other case: if $\lambda^* \leq 2\eps'''$ and $y^* = (y,z,h)$ is a strategy with this value, then we find that
\[
  A'''y^* \leq 2 \eps'''.
\]
This implies that $|h -1/2| \leq \eps'''$. Since $-b^T y + h\alpha/R \leq 2\eps'''$ we get that $b^T y \geq h\alpha/R - 2\eps''' \geq \alpha / (2R) - 2(\alpha/(2R) + 1)\eps'''$. A similar argument also shows that for all $j\in [n]$
\[
(A y)_j \leq  c_j/ (2R) + 2(c_j/(2R) + 1)\eps'''.
\] 
Let $\hat{y} = 2Ry$, then
\begin{alignat}{2}
  b^T\hat{y} &\geq  \alpha - 2(\alpha+2R) \eps''' &\geq \alpha - 6R\eps''',\label{eq:erroredopt}\\
  A\hat{y} &\leq  c + 2(c+2R) \eps''' &\leq c + 6R\eps'''.\label{eq:erroredconstraints}
\end{alignat}
Let $x^*$ be an optimal solution for the dual~\eqref{eq:generalLPdual}, such that $\sum_i x_i \leq r$. Then by applying weak duality on the relaxed constraints in~\eqref{eq:erroredconstraints} we find that
\[
  b^T\hat{y} \leq (c+e6R\eps''')^Tx^* \leq \opt + 6Rr\eps'''
\]
and hence by \eqref{eq:erroredopt} we can conclude
\[
  \opt \geq \alpha - 6R(r+1)\eps''' = \alpha - \eps. \qedhere
\]
\end{proof}

Via this reduction we give two new quantum LP-solvers. The first improves the error dependence of quantum LP-solvers to cubic; an LP-solver obtained from a state-of-the-art quantum SDP-solver~\cite{apeldoorn2018ImprovedQSDPSolving} would have a fifth-power error dependence. The second solver is based on our sparse algorithm and is the first quantum LP-solver that depends on the sparsity of the LP instead of on $n$ and $m$.
\begin{theorem}
  Given quantum query access to an LP of the form~\eqref{eq:generalLPprimal}, an $\eps$-optimal and $\eps$-feasible $y$ can be found with probability $1-\delta$ using either
  \begin{itemize}
  \item $\bOt{(\sqrt{n}+\sqrt{m})\left(\frac{R(r+1)}{\eps} \right)^{\!\!3} }$ quantum queries to a dense matrix oracle and the same number of other gates, or
  \item $\bOt{\sqrt{s}\left(\frac{R(r+1)}{\eps} \right)^{\!\!3.5} }$ quantum queries to a sparse matrix oracle and the same number of other gates.
  \end{itemize}
\end{theorem}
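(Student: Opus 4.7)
The plan is to directly combine the reduction from Section~\ref{sec:reduction} with the quantum zero-sum game algorithms of Theorems~\ref{thm:DenseQuantum} and~\ref{thm:SparseQuantum}. First I would use $\bigO{\log(R/\eps)}$ iterations of binary search on the threshold $\alpha\in[-R,R]$ to locate $\opt$ to additive precision $\eps/2$; each step requires distinguishing $\opt<\alpha$ from $\opt\geq\alpha-\eps/2$, which by Lemma~\ref{lem:reduction} reduces to approximating the value of the zero-sum game on the payoff matrix $A'''$ from~\eqref{eq:generalgamematrix} to additive error $\eps''' = \Theta(\eps/(R(r+1)))$.

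Next I would verify that quantum query access to the LP oracles (for $A$, $b$, $c$) induces, at $\bigO{1}$ amortized overhead, the corresponding quantum query access to $A'''$ required by Theorems~\ref{thm:DenseQuantum} and~\ref{thm:SparseQuantum}. For the dense oracle this is immediate: every entry of $A'''$ is either a universal constant ($0,\pm 1$), the known scalar $\alpha/R$, or a rescaled single entry of $A$, $b$, or $c$. Plugging $\eps'''$ into Theorem~\ref{thm:DenseQuantum} and noting that $A'''$ has dimensions $\bigO{n+m}$ yields a per-binary-step complexity of $\bOt{(\sqrt{n}+\sqrt{m})(R(r+1)/\eps)^3}$; the $\bigO{\log(R/\eps)}$ binary search overhead is absorbed in the $\bOt{\cdot}$ notation.

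For the sparse case the analogous application of Theorem~\ref{thm:SparseQuantum} to $A'''$ would give $\bOt{\sqrt{s'}(R(r+1)/\eps)^{3.5}}$, where $s'$ bounds both row and column sparsity of $A'''$. I would check that every column of $A'''$ is $\bigO{s}$-sparse (using the $s$-sparsity of $b$ and $c$) and that all rows except the first two are likewise $\bigO{s}$-sparse. The subtle point is rows~1 and~2, whose $m+1$ non-zero $\pm 1$ entries would formally inflate the sparsity bound to $m$. To handle this I would separate these two rows off in the Gibbs-sampling analysis of Lemma~\ref{lem:GibbsSparse}: sparse-oracle queries that land in them are answered in closed form from the known structure without consulting $A$, and their contributions to the relevant $w_j$ sums are computed analytically. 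After this bookkeeping the effective sparsity used in Lemma~\ref{lem:GibbsSparse} remains $\bOt{s}$, delivering the claimed sparse bound.

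The main obstacle is precisely this last step, namely auditing the Gibbs-sampling subroutine so that the two fully-structured dense rows of $A'''$ are absorbed without inflating the $\sqrt{s}$ factor. Once this is handled, recovering the primal LP solution is free: the proof of Lemma~\ref{lem:reduction} already exhibits $\hat{y}=2Ry$ as an $\eps$-optimal and $\eps$-feasible vector whenever $y^{*}=(y,z,h)$ is a near-optimal strategy for the game produced by Algorithm~\ref{alg:main}. Union-bounding failure probabilities over the $\bigO{\log(R/\eps)}$ binary search steps, each run with error boosted to $\delta/\bigO{\log(R/\eps)}$, then yields the claimed overall $1-\delta$ success.
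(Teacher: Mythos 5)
Your overall strategy matches the paper's: reduce the LP to a zero-sum game via Lemma~\ref{lem:reduction}, run binary search over $\alpha$, and invoke Theorems~\ref{thm:DenseQuantum} and \ref{thm:SparseQuantum} with additive precision $\Theta(\eps/(R(r+1)))$. The dense case, error budgeting, union bound over the binary search steps, and the recovery of $\hat{y}=2Ry$ are all handled correctly. The gap is in the sparse case, exactly where you flag the ``main obstacle.''

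You correctly identify that rows~$1$ and~$2$ of $A'''$ are dense, but your proposed fix --- answering sparse-oracle queries to these rows ``in closed form'' and computing their contributions to the $w_j$'s ``analytically'' --- does not actually resolve the problem. The issue is not the cost of \emph{evaluating} entries or sums; it is that Lemma~\ref{lem:GibbsSparse}'s rejection-sampling analysis requires the bound $\sum_{j} w_j \le \beta s \le m/2$, where $w_j = \sum_i x_i |A_{ij}|$. Since rows~$1$ and $2$ have all entries of magnitude~$1$, they contribute $|x_1|+|x_2|$ to \emph{every} $w_j$, so $\sum_j w_j \ge (|x_1|+|x_2|)(m+2)$, which can be of order $\beta m$ rather than $\beta s$. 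Knowing these contributions exactly does not shrink them; the absolute values in the definition of $w_j$ prevent any cancellation. This breaks the $\Omega(1/(\beta s))$ success-probability bound that gives the $\sqrt{s}$ dependence.

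The observation you need (and the one the paper uses) is that Gibbs distributions are invariant under adding any multiple of the all-one vector to the exponent: $G(u) = G(u + c\,e)$. Rows~$1$ and $2$ of $A'''$ are each the all-one vector (up to sign) plus an $\bigO{1}$-sparse correction, so in the Gibbs-sampling subroutine one may replace them by those sparse corrections without changing the target distribution. After this replacement the effective row sparsity is $\bigO{s}$, the $w_j$ bound holds, and Lemma~\ref{lem:GibbsSparse} applies as stated. Your ``analytic bookkeeping'' should be replaced by this shift-invariance argument to close the gap.
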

\begin{proof}
  The $\bOt{(\sqrt{n}+\sqrt{m})\left(\frac{R(r+1)}{\eps} \right)^3}$ bound follows directly from Lemma~\ref{lem:reduction} and Lemma~\ref{thm:DenseQuantum}.

For the sparse case let $s$ be the maximum of the sparsity of $A$, $b$ and $c$. Then apart from the first two rows, every row and column of $A'$ is $(s+3)$-sparse.
However, the row sparsity only matters in the complexity of the Gibbs-sampling step, in which multiples of the all-one vector can be added to the exponent without changing the Gibbs state. Since the first two rows are the all-one vector plus a $1$-sparse vector, we can treat them as effectively $1$-sparse for the Gibbs sampling step. The stated complexity bound then follows from Lemma~\ref{lem:reduction} and Lemma~\ref{thm:SparseQuantum}.
\end{proof}

% If we want to determine whether $\opt\geq \alpha$ we can turn \eqref{eq:generalLPprimal} the following feasibility problem:
% \vskip-2mm\noindent
% \begin{align}
% \ipc{e}{x} + v + w & =1\nonumber\\
% \ipc{e}{x} - v + w & \leq \eps\nonumber\\
% -\ipc{e}{x} + v - w & \leq \eps\nonumber\\
% -\ipc{c}{x} +\alpha v/R & \leq \eps\label{eq:generalfeasiblwe}\\
% Ax - bv/R &\leq \eps  \nonumber\\
% (x,v,w)&\geq 0, \nonumber
% \end{align}
% where $x\in\R^n$ and $v,w\in \R$ and with $(x,v,w)$ we denote the $n+2$ dimensional vector which coincides with $x$ on the first $n$ coordinates and whose last two coordinates equal $v$ and $w$.

% If $\opt\geq \alpha$, then $(x^*/(2R),1/2,1/2-\nrm{x^*}_1/(2R))$ is a feasible point of \eqref{eq:generalfeasiblwe} for $\eps=0$. On the other hand if $(x',v',w')$ is a feasible solution to \eqref{eq:generalfeasiblwe} for some $\eps$, then $v'\in [1/2-\eps/2,1/2+\eps/2]$. Therefore $\tilde{x}:=2R x'$ satisfies both
% \begin{align}
% \ipc{c}{\tilde{x}}  &\geq \alpha-(2R+1)\eps,\\
% A\tilde{x}&\leq b +(2R+1)\eps.
% \end{align}
% By weak duality in \eqref{eq:generalLPprimal}-\eqref{eq:generalLPdual} we also know that
% \begin{align*}
% \ipc{c}{\tilde{x}}\leq \ipc{y^*}{b+e(2R+1)\eps }\leq \opt +r(2R+1)\eps,
% \end{align*}
% from which it follows that $\opt\geq \alpha -(r+1)(2R+1)\eps$.

\section*{Acknowledgments}
The authors are grateful to Ronald de Wolf for useful comments on the manuscript.

\bibliographystyle{alphaUrlePrint}
\bibliography{Bibliography}

\end{document}